\newcommand*{\thmdep}[2]{}
\newcommand*{\wLoG}{without loss of generality}
\let\eps\varepsilon
\newcommand*{\chat}{\widehat{c}}
\newcommand*{\ghat}{\widehat{g}}
\newcommand*{\phat}{\widehat{p}}
\newcommand*{\Ical}{\mathcal{I}}
\newcommand*{\alphahat}{\widehat{\alpha}}
\newcommand*{\defeq}{:=}
\DeclareMathOperator*{\E}{E}
\DeclareMathOperator*{\argmin}{argmin}
\DeclareMathOperator*{\argmax}{argmax}
\def\colorschemesepia{sepia}
\def\colorschemedark{dark}
\def\colorschemelight{light}
\let\colorscheme\colorschemelight
\colorlet{textColor}{black}
\colorlet{bgColor}{white}
\definecolor{textColor}{HTML}{433423}
\definecolor{bgColor}{HTML}{fbf0da}
\definecolor{textColor}{HTML}{bdc1c6}
\definecolor{bgColor}{HTML}{202124}
\definecolor{textBlue}{HTML}{8ab4f8}
\definecolor{textRed}{HTML}{f9968b}
\definecolor{textGreen}{HTML}{81e681}
\definecolor{textPurple}{HTML}{c58af9}
\colorlet{textBlue}{blue!50!black}
\colorlet{textRed}{red!50!black}
\colorlet{textGreen}{green!50!black}
\definecolor{textPurple}{HTML}{681da8}
\g@addto@macro{\UrlBreaks}{%
\do\/%
\do\a\do\b\do\c\do\d\do\e\do\f\do\g\do\h\do\i\do\j\do\k\do\l\do\m%
\do\n\do\o\do\p\do\q\do\r\do\s\do\t\do\u\do\v\do\w\do\x\do\y\do\z%
\do\A\do\B\do\C\do\D\do\E\do\F\do\G\do\H\do\I\do\J\do\K\do\L\do\M%
\do\N\do\O\do\P\do\Q\do\R\do\S\do\T\do\U\do\V\do\W\do\X\do\Y\do\Z%
\do\0\do\1\do\2\do\3\do\4\do\5\do\6\do\7\do\8\do\9%
}
\newcolumntype{L}{>{$\displaystyle}l<{$}}
\algnewcommand{\LineComment}[1]{\State \textcolor{gray}{\texttt{//} \textit{#1}}}
\newenvironment*{tightemize}{\begin{itemize}[noitemsep]}{\end{itemize}}%
\newenvironment*{tightenum}{\begin{enumerate}[noitemsep]}{\end{enumerate}}%
\newtheorem{theorem}{Theorem}
\newtheorem{definition}{Definition}
\newtheorem{corollary}[theorem]{Corollary}
\newtheorem{lemma}[theorem]{Lemma}
\newtheorem{observation}[theorem]{Observation}
\newtheorem{example}[definition]{Example}
\crefname{claim}{Claim}{Claims}
\crefname{property}{Property}{Properties}
\crefname{example}{Example}{Examples}
\crefname{observation}{Observation}{Observations}
\crefname{transformation}{Transformation}{Transformations}
\title{EF1 for Mixed Manna with Unequal Entitlements}
\author{
Jugal Garg%
\thanks{Department of Industrial \& Enterprise Engineering, University of Illinois at Urbana-Champaign, USA}
\\ \texttt{\small jugal@illinois.edu}
\and
Eklavya Sharma\footnotemark[1]
\\ \texttt{\small eklavya2@illinois.edu}
}
\date{\empty}
\begin{document}

\maketitle

\begin{abstract}
We study fair division of indivisible mixed manna when agents have unequal entitlements,
with weighted envy-freeness up to one item (WEF1) as our primary notion of fairness.
We identify several shortcomings of existing techniques to achieve WEF1.
Hence, we relax WEF1 to weighted envy-freeness up to 1 transfer (WEF1T),
and give a polynomial-time algorithm for achieving it.
We also generalize Fisher markets to the mixed manna setting,
and use them to get a polynomial-time algorithm for two agents that outputs a WEF1 allocation.

\end{abstract}

\section{Introduction}
\label{sec:intro}

Fair division of indivisible items is a central problem in economics and game theory.
Here $m$ items must be divided among $n$ agents in a \emph{fair} way,
i.e., no agent is unreasonably favored over another agent.

Most existing work on fair division can be divided into two broad settings:
fair division of goods, and fair division of chores.
However, there is another interesting setting, called \emph{mixed manna}.
Here the set of items contains both goods and chores.
Moreover, whether an item is a good or a chore is subjective across the agents.
E.g., when distributing teaching responsibilities among faculty members in a university,
someone who likes teaching that course would see it as a good,
and someone who dislikes teaching that course would see it as a chore.

Another twist to classic fair division is when the agents have \emph{unequal entitlements}
(also called the \emph{weighted} or \emph{asymmetric} setting).
E.g., if multiple people invest different amounts into a joint venture,
it is reasonable to demand that returns be divided among them
proportional to the investment amount.
During inheritance division, closer relatives expect to receive a larger share of the inheritance.

Although fair division of mixed manna has been studied before,
and weighted fair division of either goods or chores has been studied before,
not much investigation has gone into weighted fair division of mixed manna.
Hence, in this work, we study fair division of mixed manna
when agents have unequal entitlements and additive valuations.

Formally, in fair division, there are $n$ agents and a set $M$ of items.
Each agent $i$ has a \emph{valuation function} $v_i: 2^M \to \mathbb{R}$ where
$v_i(S)$ is a number denoting how much agent $i$ likes the set $S \subseteq M$ of items.
Each agent $i$ also has a positive real number $w_i$ denoting her entitlement.
Our goal is to output an allocation $A$, i.e., a partition $(A_1, \ldots, A_n)$ of the items, that is \emph{fair}.

The notion of fairness we consider is weighted envy-freeness up to one item (WEF1).
An allocation $A$ is EF1 if for every ordered pair $(i, j)$ of agents,
\begin{align*}
& \frac{v_i(A_i)}{w_i} \ge \frac{v_i(A_j)}{w_j},
\quad\textrm{or}\; \frac{v_i(A_i)}{w_i} \ge \frac{v_i(A_j \setminus \{g\})}{w_j} \textrm{ for some } g \in A_j,
\\ &\qquad\textrm{or}\; \frac{v_i(A_i \setminus \{c\})}{w_i} \ge \frac{v_i(A_j)}{w_j} \textrm{ for some } c \in A_i.
\end{align*}

In addition to fairness, we also study Pareto optimality (PO).
An allocation is pareto optimal if agents cannot rearrange items among themselves in a way
that no one becomes worse off and someone becomes better off.

\subsection{Our Results}
\label{sec:our-results}

In \cref{sec:ef1}, we first give a polynomial-time algorithm whose output is
weighted envy-free up to 1 transfer (WEF1T), which is a weaker notion of fairness than WEF1.
Although we couldn't prove or disprove the existence of WEF1 allocations for mixed manna,
we present insights into the limitations of existing techniques.

Several results for EF1+PO allocations for goods and for chores are based on Fisher market equilibria.
In \cref{sec:ef1-po}, we generalize Fisher market equilibria to our setting of weighted mixed manna.
Interestingly, market equilibria in this setting can behave in ways
slightly different from what one would expect in the goods-only and chores-only settings.
Then, we give a polynomial time algorithm based on market equilibrium
that outputs a WEF1+PO allocation for the special case of two agents.

\subsection{Related Work}

Fair division using EF1 as the notion of fairness has been the subject of a lot of research,
and a wide variety of techniques have been used to attack the problem and its special cases.
See \cref{table:known-results} for a summary of the results.

\begin{table}[htb]
\centering
\caption{Prior work on WEF1 and WEF1+PO.
Gray text indicates work subsumed by subsequent results in the table.}
\vspace{0.5em}
\label{table:known-results}
\newcommand*{\obsolText}{\color{textColor!30!bgColor}}
\begin{tabular}{ccccccc}
\toprule
fairness notion &  & polytime & constraints & vfunc & technique\textsuperscript{\textdagger}
\\ \midrule
   \obsolText EF1 & \obsolText goods & \obsolText yes & \obsolText -- & \obsolText additive & \obsolText rr (folk)
\\ EF1 & goods & yes & -- & monotone & ece \cite{lipton2004approximately}
\\ \obsolText EF1+PO & \obsolText goods & \obsolText no & \obsolText -- & \obsolText additive & \obsolText mnw \cite{caragiannis2019unreasonable}
\\ WEF1 & goods & yes & -- & additive & pseq \cite{chakraborty2021weighted}
\\ WEFX & goods & yes & -- & identical & ece \cite{springer2024almost}
\\ WEF1+PO & goods & pseudo & -- & additive & meq \cite{chakraborty2021weighted}
\\ WEF1+PO & goods & yes & $n=2$ & additive & aw \cite{chakraborty2021weighted}
\\ WWEF1+PO & goods & no & -- & additive & mnw \cite{chakraborty2021weighted}
\\ \midrule
   \obsolText EF1 & \obsolText chores & \obsolText yes & \obsolText -- & \obsolText additive & \obsolText rr (folk)
\\ \obsolText EF1+PO & \obsolText chores & \obsolText yes & \obsolText $n=3$ & \obsolText additive & \obsolText meq \cite{garg2023new}
\\ \obsolText EF1+PO & \obsolText chores & \obsolText yes & \obsolText 2 util funcs & \obsolText additive & \obsolText meq \cite{garg2023new}
\\ EFX+PO & chores & yes & $n=3$ & bival add & meq \cite{garg2023new}
\\ WEF1 & chores & yes & -- & additive & pseq \cite{springer2024almost}
\\ WEF1+PO & chores & yes & 3 util funcs & additive & meq \cite{garg2024weighted}
\\ WEF1+PO & chores & yes & 2 chore types & additive & meq \cite{garg2024weighted}
\\ \midrule
   WPROP1+PO & mixed & yes & -- & additive & meq \cite{aziz2020polynomial}
\\ EF1 & mixed & yes & -- & additive & pseq \cite{aziz2021fair}
\\ EF1 & mixed & yes & -- & dbl mono & ece \cite{bhaskar2021approximate}
\\ EF1+PO & mixed & yes & $n=2$ & additive & aw \cite{aziz2021fair}
\\ \bottomrule
\end{tabular}

\textsuperscript{\textdagger}\footnotesize folk: unpublished but well-known,
rr: round robin,
ece: envy cycle elimination,
pseq: picking sequence,
meq: Fisher market equilibrium,
aw: adjusted winner,
mnw: maximum Nash welfare.
\end{table}

Several other fairness notions are related to EF1.
EFX (envy-free up to any good) is a strengthening of EF1.
For goods, an allocation is EFX if between every pair $(i, j)$ of agents
and every good $g$ in $j$'s bundle, $i$ stops envying $j$ if $g$ is removed from $j$'s bundle.
For chores, an allocation is EFX if between every pair $(i, j)$ of agents
and every chore $c$ in $i$'s bundle, $i$ stops envying $j$ if $c$ is removed from $i$'s bundle.
The existence of EFX allocations (for both the goods setting and the chores setting)
is a major open problem in fair division.
\cite{springer2024almost} studied EFX in the weighted setting (called WEFX)
and showed that WEFX allocations are not guaranteed to exist.

Another related notion of fairness is \emph{PROP1}.
An allocation is PROP1 if for every agent $i$, her bundle's value is at least her proportional share
minus the value of some good not in her bundle or minus the disutility of some chore in her bundle.
Agent $i$'s proportional share is her value for the entire set of items divided by $n$.
For unequal entitlements, we use the term WPROP1, and agent $i$'s proportional share
is her value for the entire set of items multiplied by $w_i / (\sum_{j=1}^n w_j)$,
where $w_j$ is agent $j$'s entitlement.
\cite{aziz2021fair} showed that an EF1 allocation of mixed manna is also PROP1,
but \cite{chakraborty2021weighted} showed that a WEF1 allocation of goods may not be WPROP1.

\cite{chakraborty2021weighted} introduced a notion called WWEF1 (weakly WEF1),
which is a relaxation of WEF1, but is the same as EF1 for equal entitlements.
They showed that an allocation maximizing the weighted Nash welfare is WWEF1,
which generalizes the result of \cite{caragiannis2019unreasonable} for the unweighted setting.

Although market equilibria have been studied for mixed manna before
\cite{bogomolnaia2017competitive,chaudhury2021competitive}, we define them slightly differently.
We don't include budgets in our formultion, and our definition is easier to use
when agents have additive valuations.

\section{Preliminaries}

For any $t \in \mathbb{Z}_{\ge 0}$, define $[t] \defeq \{1, 2, \ldots, t\}$.

\subsection{Fair Division Instances}

A fair division instance is given by the tuple $(N, M, V, w)$,
where $N$ is the set of agents, $M$ is the set of items,
$V = (v_i)_{i \in N}$ is the \emph{valuation profile},
i.e., $v_i: 2^M \to \mathbb{R}$ is agent $i$'s valuation function,
and $w = (w_i)_{i \in N}$ is the entitlement vector,
where $w_i \in \mathbb{R}_{>0}$ is agent $i$'s entitlement.
Often, we assume \wLoG{} that $N = [n]$ and $M = [m]$.

For notational convenience, for any item $t$, we denote $v_i(\{t\})$ by $v_i(t)$.
A valuation function $v: 2^M \to \mathbb{R}$ is \emph{additive}
if for any set $S$ of items, we have $v(S) = \sum_{j \in S} v(j)$.
Unless specified otherwise, we assume that all valuation functions are additive.

We can classify any item $j \in M$ as a good, neutral, or a chore:
\begin{enumerate}
\item $j$ is a \emph{good} if $v_i(j) > 0$ for some $i \in N$.
\item $j$ is \emph{neutral} if $v_i(j) = 0$ for some $i \in N$ and $v_i(j) \le 0$ for all $i \in N$.
\item $j$ is a \emph{chore} if $v_i(j) < 0$ for all $i \in N$.
\end{enumerate}
Additionally, a good $j$ is called \emph{pure} if $v_i(j) \ge 0$ for all $i \in N$.

For a fair division instance $\Ical \defeq ([n], [m], (v_i)_{i=1}^n, w)$,
a fractional allocation $x$ is one where for each item $j$,
agent $i$ gets an $x_{i,j}$ fraction of item $j$.
Hence, $\sum_{i=1}^n x_{i,j} = 1$.
For a fractional allocation $x$, the vector $x_i \defeq (x_{i,1}, \ldots, x_{i,m})$
is called agent $i$'s \emph{bundle} in $x$.
For any vector $z \in \mathbb{R}^m_{\ge 0}$ and valuation function $v$,
let $v(z) \defeq \sum_{j=1}^m v(j)z_j$.
Hence, agent $i$'s value for agent $j$'s bundle is denoted as $v_i(x_j)$.

If $x_{i,j} \in \{0, 1\}$ for all $i \in [n]$ and $j \in [m]$,
then $x$ is said to be \emph{integral}, and can alternatively be expressed as
a tuple $A \defeq (A_1, \ldots, A_n)$, where $A_i \defeq \{j \in [m]: x_{i,j} = 1\}$
is the set of items allocated to agent $i$, called agent $i$'s \emph{bundle}.
Unless specified otherwise, all allocations are assumed to be integral.

\subsection{Fairness}

\begin{definition}
Let $A$ be an allocation for a fair division instance $(N, M, (v_i)_{i=1}^n, w)$.
\begin{enumerate}
\item Agent $i$ envies agent $j$ in $A$ if
    \[ \frac{v_i(A_i)}{w_i} < \frac{v_i(A_j)}{w_j}. \]
    Allocation $A$ is EF (envy-free) if no agent envies any other agent.

\item Agent $i$ EF1-envies agent $j$ in $A$ if $i$ envies $j$ in $A$ and
    \[ \frac{v_i(A_i \setminus \{t\})}{w_i} < \frac{v_i(A_j)}{w_j} \;\forall t \in A_i,
    \qquad\textrm{and}\qquad
    \frac{v_i(A_i)}{w_i} < \frac{v_i(A_j \setminus \{t\})}{w_j} \;\forall t \in A_j. \]
    Allocation $A$ is EF1 (envy-free up to 1 item) if no agent EF1-envies any other agent.

\item Agent $i$ EF1T-envies agent $j$ if $i$ envies $j$ in $A$ and
    \[ \frac{v_i(A_i \setminus \{t\})}{w_i} < \frac{v_i(A_j \cup \{t\})}{w_j} \;\forall t \in A_i,
    \qquad\textrm{and}\qquad
    \frac{v_i(A_i \setminus \{t\})}{w_i} < \frac{v_i(A_j \setminus \{t\})}{w_j} \;\forall t \in A_j. \]
    Allocation $A$ is EF1T (envy-free up to 1 transfer) if no agent EF1T-envies any other agent.
\end{enumerate}
\end{definition}

We use the terms WEF, WEF1, and WEF1T when we consider the setting with unequal entitlements
and use the terms EF, EF1, and EF1T when we consider the setting with equal entitlements.

\subsection{Efficiency}

\begin{definition}[PO and fPO]
A fractional allocation $x$ is said to \emph{Pareto-dominate}
another fractional allocation $y$ in instance $\Ical \defeq ([n], [m], (v_i)_{i=1}^n, w)$
iff $v_i(x_i) \ge v_i(y_i)$ for all $i \in [n]$
and $v_i(x_i) > v_i(y_i)$ for some $i \in [n]$.

An integral allocation is \emph{Pareto-optimal} (PO) in $\Ical$ iff
it is not Pareto-dominated in $\Ical$ by any other integral allocation.
A (fractional or integral) allocation is \emph{fractionally Pareto-optimal} (fPO) in $\Ical$
iff it is not Pareto-dominated in $\Ical$ by any other fractional allocation.
\end{definition}

Note that any integral fPO allocation is also PO.

\section{WEF1 and WEF1T}
\label{sec:ef1}

We first show how to efficiently compute WEF1T allocations for mixed manna.
Then we turn to the problem of computing WEF1 allocations of mixed manna
and illustrate the limitations of existing techniques.

\subsection{Algorithm for WEF1T}
\label{sec:ef1:sep}

We show that if we compute a WEF1 allocation of the goods
and a WEF1 allocation of the chores independently,
then the resulting allocation of the mixed manna instance
is weighted envy-free up to one transfer (WEF1T).
Since WEF1 allocations for goods and chores can be computed in polynomial time
\cite{chakraborty2021weighted,aziz2021fair}, we can thus compute WEF1T allocations
for mixed manna in polynomial time too.

\begin{theorem}
\label{thm:wef1t}
Let $([n], G \cup C, (v_i)_{i=1}^n, w)$ be a fair division instance,
where $G$ is the set of goods and neutral items, and $C$ is the set of chores.
Let $A^{(G)}$ be a WEF1 allocation of $G$ and $A^{(C)}$ be a WEF1 allocation of $C$.
Then allocation $A$ is WEF1T, where $A_i \defeq A^{(G)}_i \cup A^{(C)}_i$ for every agent $i \in [n]$.
\end{theorem}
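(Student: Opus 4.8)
The plan is to fix an ordered pair of agents $(i,j)$ and show that $i$ does not WEF1T-envy $j$ in $A$. Suppose for contradiction that $i$ WEF1T-envies $j$. The first step is to record what the WEF1 guarantees on $A^{(G)}$ and $A^{(C)}$ give us. From the WEF1 property of $A^{(G)}$ applied to $(i,j)$: either $v_i(A^{(G)}_i)/w_i \ge v_i(A^{(G)}_j)/w_j$, or there is a good $g \in A^{(G)}_j$ with $v_i(A^{(G)}_i)/w_i \ge v_i(A^{(G)}_j \setminus \{g\})/w_j$ (the ``remove a chore from $i$'s side'' disjunct is vacuous here since $G$ has no chores for removal to help, though one should check the definitions carefully). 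Symmetrically, from the WEF1 property of $A^{(C)}$: either $v_i(A^{(C)}_i)/w_i \ge v_i(A^{(C)}_j)/w_j$, or there is a chore $c \in A^{(C)}_i$ with $v_i(A^{(C)}_i \setminus \{c\})/w_i \ge v_i(A^{(C)}_j)/w_j$.

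The second step is to combine these. Using additivity, $v_i(A_i) = v_i(A^{(G)}_i) + v_i(A^{(C)}_i)$ and likewise for $j$. The idea is that in the worst case we need to remove one item from $i$'s bundle (a chore $c \in A^{(C)}_i$, which raises $v_i(A_i \setminus \{c\})$) and simultaneously account for one item on $j$'s side (a good $g \in A^{(G)}_j$). The candidate single transfer witnessing WEF1T should be: transfer the chore $c$ from $i$ to $j$. Then $A_i \setminus \{c\}$ has value $v_i(A^{(G)}_i) + v_i(A^{(C)}_i \setminus \{c\})$, and $A_j \cup \{c\}$ has value $v_i(A^{(G)}_j) + v_i(A^{(C)}_j) + v_i(c)$. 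Alternatively the witnessing transfer could be moving the good $g$ from $j$ to $i$. The heart of the argument is a short case analysis on which disjunct holds for goods and which for chores, adding the two inequalities after scaling, and checking that in each of the (at most) four combinations, one of the two candidate transfers makes $i$ not envy $j$. The WEF1T definition's first inequality, $v_i(A_i \setminus \{t\})/w_i \ge v_i(A_j \cup \{t\})/w_j$ for the transferred $t=c$, is the one to target when a chore removal from $i$ is needed; when instead a good removal from $j$ suffices, transferring that good $g$ to $i$ gives $v_i((A_i \cup \{g\}))/w_i \ge \ldots$, but note WEF1T as defined only allows $t \in A_i$ or $t \in A_j$ and takes $A_i \setminus \{t\}$ on the left, so for $t = g \in A_j$ the relevant inequality is $v_i(A_i)/w_i \ge v_i(A_j \setminus \{g\})/w_j$ — which is exactly (a weakening of) what the goods-WEF1 disjunct gives, combined with the chores inequality and nonpositivity of chore values.

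The third step is to make sure the arithmetic of combining the two scaled inequalities actually yields envy-freeness (not just WEF1-type) on the combined instance after the single transfer. Here one uses that chores have $v_i(c) \le 0$ and goods have $v_i(g) \ge 0$, so dropping $g$ from $j$'s goods bundle only decreases $j$'s value from $i$'s perspective and moving $c$ away from $i$ only increases $i$'s own value — these monotonicities are what let the two separate WEF1 inequalities sum to a single-removal (equivalently single-transfer) inequality on the whole instance. I expect the main obstacle to be bookkeeping: getting the case split exactly right so that exactly one removal/transfer is used (not two), and handling the disjuncts of the WEF1 definition for a pure-goods instance and a pure-chores instance, where some of the three disjuncts are degenerate. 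A secondary subtlety is neutral items: they sit in $G$ but have $v_i \le 0$, so one must confirm the goods-WEF1 guarantee and the monotonicity claims still behave correctly when ``goods'' includes neutrals — this should be fine since $v_i(A^{(G)}_j \setminus \{g\}) \ge v_i(A^{(G)}_j)$ fails for neutral $g$, but we only ever need to remove an item that *helps*, so we would choose $g$ to be an actual good or skip the removal entirely.
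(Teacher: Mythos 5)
Your overall route is the paper's: apply the WEF1 guarantees of $A^{(G)}$ and $A^{(C)}$ separately to the pair $(i,j)$, add the two scaled inequalities to get
\[ \frac{v_i(A_i \setminus \{c\})}{w_i} \;\ge\; \frac{v_i(A_j \setminus \{g\})}{w_j} \]
for some chore $c \in A^{(C)}_i$ and some good $g \in A^{(G)}_j$ (with the no-removal cases handled separately), and then argue that one of the two candidate single transfers --- $c$ from $i$ to $j$, or $g$ from $j$ to $i$ --- eliminates the envy. The paper resolves the choice by comparing $d_i(c) \defeq -v_i(c)$ with $v_i(g)$: if $d_i(c) \ge v_i(g)$ then $v_i(A_j \setminus \{g\}) \ge v_i(A_j \cup \{c\})$, so transferring $c$ works; otherwise $v_i(A_i \cup \{g\}) \ge v_i(A_i \setminus \{c\})$, so transferring $g$ works. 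You identify both candidate transfers but never state this deciding comparison, and that is exactly where your elaboration slips.

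Concretely, the last clause of your second paragraph is wrong. In the case where both removals are genuinely needed, you retreat to the literal text of the EF1T definition (which puts $A_i \setminus \{t\}$, i.e.\ just $A_i$, on the left for $t \in A_j$) and claim $v_i(A_i)/w_i \ge v_i(A_j \setminus \{g\})/w_j$ follows from the goods inequality ``combined with the chores inequality and nonpositivity of chore values.'' It does not: when the chores guarantee only holds after removing $c$ from $i$'s bundle, the combination lower-bounds $v_i(A_i \setminus \{c\})/w_i$, and removing a chore \emph{raises} value, so this says nothing useful about $v_i(A_i)/w_i$. A concrete instance with equal entitlements: $A_i = \{c\}$ with $v_i(c) = -1$ and $A_j = \{g\}$ with $v_i(g) = 10$ (both component allocations are WEF1); then $v_i(A_i) = -1 < 0 = v_i(A_j \setminus \{g\})$ and also $v_i(A_i \setminus \{c\}) = 0 < 9 = v_i(A_j \cup \{c\})$, so under the literal reading of the definition the theorem is actually false. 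The intended reading (consistent with the word ``transfer'' and with the paper's own proof) is $v_i(A_i \cup \{t\})/w_i < v_i(A_j \setminus \{t\})/w_j$ for $t \in A_j$; under that reading your first instinct, targeting $v_i(A_i \cup \{g\})/w_i \ge v_i(A_j \setminus \{g\})/w_j$, is correct, and the only missing step is the explicit $d_i(c)$ versus $v_i(g)$ comparison that selects which of the two transfers to use.
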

\begin{proof}
Let $i$ and $j$ be any two agents. Since $A^{(G)}$ and $A^{(C)}$ are WEF1,
we get that for some $c \in A^{(C)}_i$ and $g \in A^{(G)}_j$, we have
\begin{align*}
& \frac{v_i(A_i \setminus \{c\})}{w_i}
= \frac{v_i(A^{(G)}_i)}{w_i} + \frac{v_i(A^{(C)}_i \setminus \{c\})}{w_i}
\ge \frac{v_i(A^{(G)}_j \setminus \{g\})}{w_j} + \frac{v_i(A^{(C)}_j)}{w_j}
= \frac{v_i(A_j \setminus \{g\})}{w_j}.
\end{align*}
If $d_i(c) \ge v_i(g)$, then
$v_i(A_i \setminus \{c\})/w_i \ge v_i(A_j \setminus \{g\})/w_j \ge v_i(A_j \cup \{c\})/w_j$.
If $d_i(c) \le v_i(g)$, then
$v_i(A_i \cup \{g\})/w_i \ge v_i(A_i \setminus \{c\})/w_i \ge v_i(A_j \setminus \{g\})/w_j$.
Hence, $A$ is WEF1T.
\end{proof}

One technicality of our setting is that some goods may be non-pure.
We can easily modify \cite{chakraborty2021weighted} to handle this and get a WEF1 allocation for goods:
have each agent $i$ skip her turn in the picking sequence if $v_i(g) \le 0$
for every unallocated good $g$ (or, equivalently, create a dummy item of value 0
and give it to agent $i$).

\subsection{Impossibility for Ordinal WEF1}
\label{sec:ef1:hard-sd}

Many algorithms for (W)EF1 allocation are based on picking sequences
\cite{chakraborty2021weighted,springer2024almost,aziz2021fair},
i.e., agents are repeatedly asked to pick their favorite item from a subset of items.
Hence, the output of these algorithms only depend on each agent's preference order over the items
(and not on the precise valuations of the items).
We show that such algorithms cannot help us get WEF1 for mixed manna.
To formalize this result, we use the concept of \emph{ordinal compatibility} of instances.

\begin{definition}
Two additive valuation functions $u, v: 2^M \to \mathbb{R}$ are \emph{ordinally compatible}
iff $\forall t_1, t_2 \in M$, we have $u(t_1) > u(t_2) \iff v(t_1) > v(t_2)$,
and for all $t \in M$, we have $u(t) > 0 \iff v(t) > 0$ and $u(t) < 0 \iff v(t) < 0$.

Two fair division instances $(N, M, V, w)$ and $(N, M, V', w')$ are ordinally compatible iff
$v_i$ and $v'_i$ are ordinally compatible for all $i \in N$.
\end{definition}

\begin{theorem}
\label{thm:ord-compat-3g1c}
There are four ordinally compatible fair division instances,
each with 2 agents, 3 goods, one chore, and identical valuations,
such that no allocation is WEF1 for all of them.
\end{theorem}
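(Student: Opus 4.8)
Since ordinal compatibility forces all four instances to share one ranking, fix the template: two agents with identical additive valuations, three goods $g_1,g_2,g_3$ of common values $x_1>x_2>x_3>0$, and one chore $c$ of common value $-y<0$. The four instances will differ only in the numbers $(x_1,x_2,x_3,y)$ and will share one entitlement pair $(w_1,w_2)$ with $w_1\ne w_2$. The first step is to record what WEF1 means here. For two agents with identical valuations, deleting an item from one's own bundle can only shrink one's deficit if that item is the chore, and deleting an item from the other's bundle can only help if that item is a good; hence a fixed allocation $A=(A_1,A_2)$ is WEF1 precisely when, for each ordered pair $(i,j)$, either $v(A_i)/w_i \ge \bigl(v(A_j)-M_j\bigr)/w_j$, where $M_j$ is the value of $j$'s most valuable good (or $0$ if $j$ holds no good), or else $i$ holds $c$ and $\bigl(v(A_i)+y\bigr)/w_i \ge v(A_j)/w_j$. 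So WEF1-ness of each of the $2^4=16$ allocations is equivalent to a short disjunction of linear inequalities in $(x_1,x_2,x_3,y)$ whose coefficients are determined by $(w_1,w_2)$ and by who gets what. The plan is to choose $(w_1,w_2)$ and then four numeric profiles so that every one of the $16$ allocations has its WEF1 disjunction falsified in at least one of the profiles.

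I would first peel off the easy cases. Several allocations are never WEF1 for any numbers --- e.g.\ giving both larger goods, or all three goods, to one agent while the other keeps only the chore (the chore-holder envies and cannot undo it, since the other bundle has at most one good to delete and she holds no chore) --- so these need no instance at all. Giving every item to one agent is WEF1 exactly when $y\ge x_2+x_3$, so a single profile with a ``small'' chore, $y<x_2+x_3$, refutes both such allocations. The allocations in which one agent gets exactly one good are WEF1 only when $y$ is ``large'' together with a side condition preventing $x_1$ from dwarfing $x_2+x_3$ (or $x_1+x_3$, depending on which good it is); these are refuted by a profile with a large $y$ and a spread-out good vector. The remaining allocations --- one agent gets one good plus the chore, the other gets the remaining two goods, over the various assignments --- are, via the reduction, refuted by either a ``large $y$'' profile or a ``near-equal goods, small chore'' profile, according to which ordered-pair inequality is binding.

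The crux is the balanced split: one agent holds $\{g_2,g_3\}$ and the other holds $\{g_1,c\}$. Here the $\{g_2,g_3\}$-agent is always WEF1-satisfied --- deleting $g_1$ from the other's bundle sends its value below $0$ --- so the allocation is WEF1 iff the $\{g_1,c\}$-agent, call her the one with weight $w_a$ against the other's $w_b$, satisfies $w_b x_1\ge w_a(x_2+x_3)$ or $w_b(x_1-y)\ge w_a x_3$. Being able to violate both clauses in both orientations requires the entitlement ratio to lie strictly between $1/2$ and $2$: if $w_1/w_2\le 1/2$ (resp.\ $\ge 2$) then for every choice of numbers one of the two balanced allocations is WEF1, while if $w_1=w_2$ then WEF1 is just EF1, which always has a solution for two-agent mixed manna. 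So I would fix, say, $(w_1,w_2)=(2,3)$ and add a profile with $x_1$ large relative to $x_2+x_3$ --- the binding requirement $3x_1<2(x_2+x_3)$ forces $x_1<x_2+x_3$ and hence $x_2<2x_3$, which is permitted since the strict ordering is the only constraint on the cardinal values --- and with $y$ small; this single profile falsifies both balanced allocations (and, as it happens, the ``everything to one agent'' ones too).

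The main obstacle is the bookkeeping: writing out the WEF1 disjunction for each of the sixteen allocations and producing four numeric quadruples, all respecting $x_1>x_2>x_3>0>-y$, whose ``violated'' sets together cover all sixteen. None of it is deep, but the four profiles must be tuned against all sixteen constraints simultaneously, and --- as the balanced split already illustrates --- the admissible entitlement ratios, as well as the need for a ``large $x_1$/small $y$'' profile alongside ``large $y$'' and ``near-equal goods'' profiles, are dictated by particular constraints among the sixteen. Finally, since the four instances are ordinally compatible and share $(w_1,w_2)$, any algorithm whose output depends only on the agents' preference rankings and their entitlements --- in particular every picking-sequence algorithm --- produces the same allocation on all four, which therefore cannot be WEF1 on all four; this is the promised obstruction to achieving WEF1 for mixed manna by such methods.
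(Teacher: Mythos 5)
Your overall strategy is the right one and is essentially the paper's: fix two agents with identical additive valuations and unequal entitlements, three goods and one chore, and choose four cardinal profiles sharing the same ordinal data so that each of the $2^4$ allocations fails WEF1 in at least one profile. The two cases you actually work out are analyzed correctly, and your observation that the entitlement ratio must lie strictly between $1/2$ and $2$ (and be unequal) is a genuine addition: the paper's proof never specifies $w$, and one can check that its four instances do the job only when (roughly) $w_1<w_2<2w_1$, so your constraint is exactly the one the paper leaves implicit.

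The problem is that the proof is not finished, and what you set aside as ``bookkeeping'' is the entire content of this existence theorem. You never exhibit the four quadruples $(x_1,x_2,x_3,y)$, and you verify only a few of the sixteen allocations; the rest are dispatched by unproven blanket claims (``never WEF1 for any numbers'') whose justification is in places garbled --- the parenthetical asserting the chore-holder ``holds no chore'' cannot be right, and for instance the split $\{g_1,g_2\}$ versus $\{g_3,c\}$ \emph{is} WEF1 for suitable numbers when the heavier agent holds the two large goods, so these cases genuinely need the right profile to kill both orientations. You also reverse a direction in the crux case: to defeat $\{g_2,g_3\}$ versus $\{g_1,c\}$ in both orientations with $(w_1,w_2)=(2,3)$ you need $3x_1<2(x_2+x_3)$, i.e.\ $x_1$ \emph{small} relative to $x_2+x_3$ (near-equal goods), together with $y>x_1-\tfrac{2}{3}x_3$, so the chore in that profile cannot be taken arbitrarily small. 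Whether four profiles of the qualitative types you describe simultaneously cover all sixteen allocations is precisely what must be demonstrated, and the sketch does not demonstrate it. For contrast, the paper closes this gap with almost no enumeration: it varies only $v(g_3)\in\{\eps,1\}$ and $v(c)\in\{-\eps,-3\}$ and uses each of the four resulting instances to eliminate one structural possibility in turn ($g_1$ and $g_2$ together; agent $1$ holding two goods; the chore with agent $2$; the chore with agent $1$), so that the sixteen cases never have to be listed.
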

\begin{proof}
Let $G \defeq \{g_1, g_2, g_3\}$ be the set of goods,
$C \defeq \{c\}$ be the set of chores, and $M \defeq G \cup C$.
Let $0 < \eps < 1/4$. For $t \in [4]$, define the instance
$\Ical_t \defeq ([2], M, (v^{(t)}, v^{(t)}), w)$,
where $v^{(t)}(g_1) = 1 + 2\eps$, $v^{(t)}(g_2) = 1 + \eps$,
$v^{(t)}(g_3) = \begin{cases}\eps & \textrm{ if } t \textrm{ is odd}
\\ 1 & \textrm{ if } t \textrm{ is even}\end{cases}$,
and $v^{(t)}(c) = \begin{cases}-\eps & \textrm{ if } t \le 2
\\ -3 & \textrm{ if } t \ge 3\end{cases}$.
See \cref{table:ord-compat-3g1c}.
Assume some allocation $A$ is WEF1 for every $\Ical_t$ for $t \in [4]$.

\begin{table}[htb]
\centering
\caption{Four ordinally compatible valuation functions. Here $0 < \eps < 1/4$.}
\label{table:ord-compat-3g1c}
\begin{tabular}{ccccc}
\toprule $t$ & $v^{(t)}(g_1)$ & $v^{(t)}(g_2)$ & $v^{(t)}(g_3)$ & $v^{(t)}(c)$
\\ \midrule $1$ & $1+2\eps$ & $1+\eps$ & $\eps$ & $-\eps$
\\ $2$ & $1+2\eps$ & $1+\eps$ & $1$ & $-\eps$
\\ $3$ & $1+2\eps$ & $1+\eps$ & $\eps$ & $-3$
\\ $4$ & $1+2\eps$ & $1+\eps$ & $1$ & $-3$
\\ \bottomrule
\end{tabular}
\end{table}

Then $g_1$ and $g_2$ cannot be allocated to the same agent,
otherwise the other agent will be EF1-envious in $\Ical_1$.
Agent 1 cannot have 2 goods, otherwise agent 2 will be EF1-envious in $\Ical_2$.
Hence, $|A_1 \cap G| = 1$, $|A_2 \cap G| = 2$, and $g_3 \in A_2$.

If $c \in A_2$, then agent 2 EF1-envies agent 1 in $\Ical_3$.
If $c \in A_1$, then agent 1 EF1-envies agent 2 in $\Ical_4$.
This no allocation $A$ can be WEF1 for every $\Ical_t$ for $t \in [4]$.
\end{proof}

\subsection{Hard Examples for Two-Phase Algorithms}
\label{sec:ef1:hard-ext}

\cite{aziz2021fair} computes an EF1 allocation of mixed manna by first
computing an EF1 allocation of chores, and then extending it to goods.
We show that one cannot extend a WEF1 allocation of chores
to a WEF1 allocation of the mixed manna instance.

\begin{example}
Let $N = [2]$, $w = (2, 3)$, $M \defeq \{c, g_1, g_2\}$,
$d_1(c) = 1-\eps$, $d_2(c) = 1+\eps$, and $v_i(g_1) = v_i(g_2) = 1$ for $i \in N$.

Suppose we give $c$ to agent 2.
If agent 2 receives both $g_1$ and $g_2$, then agent 1 will EF1-envy her.
If agent 2 receives at most one of $g_1$ and $g_2$, then she will EF1-envy agent 1.
Hence, we can't begin by giving $c$ to agent 2.
\end{example}

Similarly, we show that one cannot extend a WEF1 allocation of goods
to an allocation of the mixed manna instance.

\begin{example}
Let $N = [2]$, $w = (2, 3)$, $M \defeq \{g, c_1, c_2\}$,
$v_1(g) = 1+\eps$, $v_2(g) = 1-\eps$, and $d_i(c_1) = d_i(c_2) = 1$ for $i \in N$.

Suppose we give $g$ to agent 2.
If agent 2 receives both $c_1$ and $c_2$, then she will EF1-envy agent 1.
If agent 2 receives at most one of $c_1$ and $c_2$, then agent 1 will EF1-envy her.
Hence, we can't begin by giving $g$ to agent 2.
\end{example}

\section{WEF1+PO Alloction Using Fisher Market}
\label{sec:ef1-po}

We extend the concepts of Fisher markets and market equilibria to the mixed manna setting.
Then we give a polynomial time algorithm based on market equilibria
to compute a WEF1+fPO allocation for mixed manna when there are only two agents.

\subsection{Fisher Market}

Let $\Ical \defeq ([n], [m], (v_i)_{i=1}^n, w)$ be a fair division instance.
A \emph{Fisher market} for $\Ical$ is a pair $(x, p)$,
where $x$ is a fractional allocation for $\Ical$
and $p \in \mathbb{R}^m$ is called the \emph{price vector}.
$(x, p)$ is called \emph{integral} if $x$ is an integral allocation.
For any vector $z \in \mathbb{R}^m_{\ge 0}$, let $p(z) \defeq \sum_{j=1}^m p_jz_j$.
Then $p(x_i)$ is called agent $i$'s \emph{budget}.
For a set $S$ of items, define $p(S) \defeq \sum_{j \in S} p_j$.

\begin{definition}[market equilibrium]
A Fisher market $(x, p)$ is called a market equilibrium if both of the following hold:
\begin{enumerate}
\item For every item $j$, $p_j > 0$ if $j$ is a good, $p_j < 0$ if $j$ is a chore,
    and $p_j = 0$ if $j$ is neutral.
\item For every agent $i \in [n]$, there exists $\alpha_i \in \mathbb{R}_{>0}$ such that
    $v_i(j) \le \alpha_i p_j$ for all $j \in [m]$,
    and $v_i(j) = \alpha_i p_j$ when $x_{i,j} > 0$.
    ($\alpha_i$ is called agent $i$'s \emph{best-bang-per-buck} (BBB).)
\end{enumerate}
\end{definition}

Note that when an agent $i$'s bundle has only neutral items,
her best-bang-per-buck $\alpha_i$ may not be unique.
This does not happen in the goods-only and chores-only setting.

Market equilibria are useful because they help us get fPO allocations.
We prove this in the following result.

\begin{lemma}
\label{thm:first-welfare}
If $(x, p)$ is a market equilibrium, then $x$ is fPO.
\end{lemma}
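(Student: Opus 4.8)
The plan is to prove this by contradiction, following the classic "first welfare theorem" argument adapted to the mixed-manna setting. Suppose some fractional allocation $y$ Pareto-dominates $x$, so $v_i(y_i) \ge v_i(x_i)$ for all $i$, with strict inequality for at least one agent $k$. The goal is to derive a contradiction by comparing the total "value" of $y$ and $x$ as measured through each agent's best-bang-per-buck scaling.

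The key inequality is the following: for any agent $i$ and any bundle $z \in \mathbb{R}^m_{\ge 0}$, the market-equilibrium condition $v_i(j) \le \alpha_i p_j$ for all $j$ gives $v_i(z) = \sum_j v_i(j) z_j \le \alpha_i \sum_j p_j z_j = \alpha_i p(z)$, while the tightness condition $v_i(j) = \alpha_i p_j$ whenever $x_{i,j} > 0$ gives $v_i(x_i) = \alpha_i p(x_i)$. Combining these with the domination hypothesis, I would argue that for every $i$, $\alpha_i p(y_i) \ge v_i(y_i) \ge v_i(x_i) = \alpha_i p(x_i)$, hence $p(y_i) \ge p(x_i)$ since $\alpha_i > 0$; and for agent $k$ the first inequality chain is strict at the middle step, so $p(y_k) > p(x_k)$. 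Summing over all agents, $\sum_i p(y_i) > \sum_i p(x_i)$. But both sides equal $\sum_j p_j \left(\sum_i x_{i,j}\right) = \sum_j p_j = p([m])$ and similarly for $y$, because $x$ and $y$ are both fractional allocations with $\sum_i x_{i,j} = \sum_i y_{i,j} = 1$ for every item $j$. This contradicts the strict inequality.

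**The step I expect to require the most care** is making sure the best-bang-per-buck inequality is applied in the correct direction for chores, where $p_j < 0$ and $v_i(j)$ may be negative. The condition $v_i(j) \le \alpha_i p_j$ still yields $v_i(z) \le \alpha_i p(z)$ for $z \ge 0$ regardless of signs, so the argument goes through uniformly; but I would double-check that a chore in agent $i$'s bundle contributes $v_i(j) z_j = \alpha_i p_j z_j$ exactly (a negative quantity on both sides), which is fine since the equilibrium condition explicitly says $v_i(j) = \alpha_i p_j$ when $x_{i,j} > 0$. The subtlety flagged in the paper — that $\alpha_i$ need not be unique when agent $i$ holds only neutral items — is harmless here: we simply fix any valid $\alpha_i > 0$ for each agent and run the argument with that choice, since all we use is positivity and the two equilibrium inequalities, both of which hold for any valid $\alpha_i$.

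In summary, the proof is: assume $y$ Pareto-dominates $x$; show $p(y_i) \ge p(x_i)$ for all $i$ with strict inequality somewhere; sum to get $p([m]) = \sum_i p(y_i) > \sum_i p(x_i) = p([m])$, a contradiction. Therefore $x$ is fPO.
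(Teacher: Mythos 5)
Your proof is correct and rests on the same two key facts as the paper's: $v_i(z) \le \alpha_i p(z)$ for any $z \in \mathbb{R}^m_{\ge 0}$ and $v_i(x_i) = \alpha_i p(x_i)$, so it is essentially the same first-welfare-theorem argument. The only difference is organizational: you derive $p(y_i) \ge p(x_i)$ for all $i$ with strict inequality for some $k$ and then contradict $\sum_i p(y_i) = \sum_i p(x_i) = p([m])$, whereas the paper splits into the cases ``$p(y_i) < p(x_i)$ for some $i$'' and ``$p(y_i) = p(x_i)$ for all $i$''; your summation step explicitly disposes of the remaining case (all $p(y_i) \ge p(x_i)$ with one strict), which the paper's case analysis leaves implicit.
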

\begin{proof}
For any agent $i$, we have $v_i(x_i) = \alpha_i p(x_i)$.
For any vector $z \in \mathbb{R}_{\ge 0}^m$, we have $v_i(z) \le \alpha_i p(z)$.
Assume $x$ is not fPO.
Suppose a fractional allocation $y$ Pareto-dominates $x$.

\textbf{Case 1}: $p(y_i) < p(x_i)$ for some $i \in [n]$.
\\ Then $v_i(y_i) \le \alpha_i p(y_i) < \alpha_i p(x_i) = v_i(x_i)$.
Hence, $y$ does not Pareto-dominate $x$.

\textbf{Case 2}: $p(y_i) = p(x_i)$ for all $i \in [n]$.
\\ Then $v_i(y_i) \le \alpha_i p(y_i) = \alpha_i p(x_i) = v_i(x_i)$ for all $i$.
Hence, $y$ does not Pareto-dominate $x$.

Hence, we get a contradiction, so $x$ is fPO.
\end{proof}

\begin{definition}[pWEF1]
Let $(A, p)$ be an integral market equilibrium
for instance $\Ical \defeq ([n], [m], (v_i)_{i=1}^n, w)$.
Agent $i$ pWEF1-envies agent $j$ if all of the following hold:
\begin{enumerate}
\item $\displaystyle \frac{p(A_i)}{w_i} < \frac{p(A_j)}{w_j}$.
\item $\forall g \in A_j$,
    $\displaystyle \frac{p(A_i)}{w_i} < \frac{p(A_j \setminus \{g\})}{w_j}$.
\item $\forall c \in A_i$,
    $\displaystyle \frac{p(A_i \setminus \{c\})}{w_i} < \frac{p(A_j)}{w_j}$.
\end{enumerate}
$(A, p)$ is pWEF1 (price WEF1) if no agent pWEF1-envies any other agent.
\end{definition}

The following results show that to get a WEF1+fPO allocation,
it suffices to find a pWEF1 market equilibrium.

\begin{lemma}
\label{thm:ef1-envy-implies-pef1-envy}
Let $(A, p)$ be a market equilibrium for instance $\Ical \defeq ([n], [m], (v_i)_{i=1}^n, w)$.
If agent $i$ WEF1-envies agent $j$ in $A$, then $i$ pWEF1-envies $j$ in $(A, p)$.
\end{lemma}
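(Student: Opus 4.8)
The plan is to translate each of the three conditions defining WEF1-envy (value-based) into the corresponding condition defining pWEF1-envy (price-based), using the two fundamental market-equilibrium inequalities from the proof of \cref{thm:first-welfare}: for every agent $i$ and every item $j$ we have $v_i(j) \le \alpha_i p_j$, with equality whenever $x_{i,j} > 0$ (equivalently $j \in A_i$). Since we are dealing with an integral equilibrium, this means $v_i(A_i) = \alpha_i p(A_i)$, and more generally $v_i(S) = \alpha_i p(S)$ whenever $S \subseteq A_i$, while $v_i(S) \le \alpha_i p(S)$ for an arbitrary set $S$. The key asymmetry to keep track of: agent $i$'s valuation agrees with $\alpha_i$ times price on sets $i$ actually holds, but only gives an inequality (in the direction $v_i \le \alpha_i p$) on sets held by others.

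First I would handle condition (1). Suppose $i$ WEF1-envies $j$; then in particular $i$ envies $j$, i.e. $v_i(A_i)/w_i < v_i(A_j)/w_j$. Substituting $v_i(A_i) = \alpha_i p(A_i)$ on the left and $v_i(A_j) \le \alpha_i p(A_j)$ on the right, and dividing by $\alpha_i > 0$, gives $p(A_i)/w_i < p(A_j)/w_j$, which is pWEF1 condition (1). Next, for condition (2): for every good $g \in A_j$, WEF1-envy gives $v_i(A_i)/w_i < v_i(A_j \setminus \{g\})/w_j$. Again $v_i(A_i) = \alpha_i p(A_i)$ exactly, and $v_i(A_j \setminus \{g\}) \le \alpha_i p(A_j \setminus \{g\})$ since $A_j \setminus \{g\}$ is a set held by another agent; dividing by $\alpha_i$ yields $p(A_i)/w_i < p(A_j \setminus \{g\})/w_j$. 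The quantifier ``$\forall g \in A_j$'' in the pWEF1 definition ranges over all items of $A_j$, but pWEF1 condition (2) as stated is only meaningful/needed for goods; I would note that if $g$ is a chore then $p(A_j \setminus \{g\}) > p(A_j) > p(A_i) \cdot w_j / w_i$ automatically from condition (1) plus $p_g < 0$, and if $g$ is neutral then $p_g = 0$ so it follows from condition (1) directly. Finally, condition (3): for every chore $c \in A_i$, WEF1-envy gives $v_i(A_i \setminus \{c\})/w_i < v_i(A_j)/w_j$. Here $A_i \setminus \{c\} \subseteq A_i$, so $v_i(A_i \setminus \{c\}) = \alpha_i p(A_i \setminus \{c\})$ with equality, and $v_i(A_j) \le \alpha_i p(A_j)$; dividing by $\alpha_i$ gives $p(A_i \setminus \{c\})/w_i < p(A_j)/w_j$. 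As with condition (2), for $c$ a good or neutral item the inequality follows from condition (1) and the sign of $p_c$.

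The one genuine subtlety — the main obstacle — is the non-uniqueness of $\alpha_i$ flagged in the remark after the market-equilibrium definition: when $A_i$ consists only of neutral items, $\alpha_i$ is not pinned down. I would address this by observing that the argument above only ever uses \emph{some} fixed choice of $\alpha_i \in \mathbb{R}_{>0}$ satisfying the equilibrium conditions, and all the inequalities are homogeneous of degree one in $\alpha_i$, so dividing through by $\alpha_i > 0$ is legitimate regardless of which valid $\alpha_i$ we picked; the conclusion does not depend on the choice. A second minor point worth a sentence: in the degenerate case $v_i(A_j) < 0$ but close to $\alpha_i p(A_j)$, or mixed-sign bundles, one must be careful that the inequality $v_i(A_j) \le \alpha_i p(A_j)$ is applied \emph{to the set as a whole} (it follows by summing the per-item inequalities $v_i(j') \le \alpha_i p_{j'}$ over $j' \in A_j$), which is valid for any set. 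With these caveats noted, the three implications are immediate, completing the proof.
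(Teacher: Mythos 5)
Your proposal is correct and follows essentially the same route as the paper: substitute $v_i(S) = \alpha_i p(S)$ for sets $S \subseteq A_i$ and $v_i(S) \le \alpha_i p(S)$ for sets held by $j$, then divide by $\alpha_i > 0$ in each of the three WEF1-envy inequalities. The extra remarks about non-uniqueness of $\alpha_i$ and the quantifier ranges are harmless but not needed, since the argument works for any valid choice of $\alpha_i$ and the WEF1-envy conditions already quantify over all items of the respective bundles.
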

\begin{proof}
Let $\alpha_i \in \mathbb{R}_{>0}$ be agent $i$'s best-bang-per-buck in $(A, p)$. Then
\begin{enumerate}
\item $\displaystyle \frac{\alpha_i p(A_i)}{w_i} = \frac{v_i(A_i)}{w_i}
    < \frac{v_i(A_j)}{w_j} \le \frac{\alpha_ip(A_j)}{w_j}$.
\item For all $g \in A_j$, $\displaystyle \frac{\alpha_i p(A_i)}{w_i} = \frac{v_i(A_i)}{w_i}
    < \frac{v_i(A_j \setminus \{g\})}{w_j} \le \frac{\alpha_ip(A_j \setminus \{g\})}{w_j}$.
\item For all $c \in A_i$, $\displaystyle
    \frac{\alpha_i p(A_i \setminus \{c\})}{w_i} = \frac{v_i(A_i \setminus \{c\})}{w_i}
    < \frac{v_i(A_j \setminus \{g\})}{w_j} \le \frac{\alpha_i p(A_j \setminus \{g\})}{w_j}$.
\end{enumerate}
Hence, agent $i$ pWEF1-envies agent $j$.
\end{proof}

\begin{corollary}
\label{thm:pef1-implies-ef1fpo}
Let $(A, p)$ be a market equilibrium.
If $(A, p)$ is pWEF1, then $A$ is WEF1+fPO.
\end{corollary}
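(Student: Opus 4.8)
The plan is to prove the contrapositive of the relevant implication: I want to show that if $A$ is a market equilibrium but $A$ is \emph{not} WEF1, then $(A,p)$ is \emph{not} pWEF1. So suppose $A$ is not WEF1. Then by definition some agent $i$ WEF1-envies some agent $j$ in $A$. By \cref{thm:ef1-envy-implies-pef1-envy}, agent $i$ then pWEF1-envies agent $j$ in $(A,p)$, which means $(A,p)$ is not pWEF1. Taking the contrapositive, if $(A,p)$ is pWEF1 then $A$ is WEF1.

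It remains to get the ``fPO'' half, but that is immediate: $(A,p)$ is a market equilibrium, so by \cref{thm:first-welfare} the allocation $A$ (being the allocation component of the market equilibrium) is fPO. Combining the two halves, $A$ is WEF1+fPO, which is exactly the claim of \cref{thm:pef1-implies-ef1fpo}.

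Concretely, the proof is essentially two invocations of earlier results glued together, so there is no real obstacle; the only thing to be a little careful about is the logical direction in the first half (it is genuinely the contrapositive of \cref{thm:ef1-envy-implies-pef1-envy}, not that lemma applied directly), and making sure that ``no agent WEF1-envies any other'' is exactly the negation of ``$A$ is not WEF1'' as stated in the fairness definition. I would write it as: assume $(A,p)$ is pWEF1; by \cref{thm:first-welfare}, $A$ is fPO; and if $A$ were not WEF1, some $i$ would WEF1-envy some $j$, so by \cref{thm:ef1-envy-implies-pef1-envy} that $i$ would pWEF1-envy $j$, contradicting that $(A,p)$ is pWEF1 — hence $A$ is WEF1, and therefore WEF1+fPO.
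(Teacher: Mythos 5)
Your proof is correct and is exactly the paper's argument: the paper's proof is a one-line "Follows from \cref{thm:first-welfare,thm:ef1-envy-implies-pef1-envy}," and you have simply spelled out the contrapositive application of \cref{thm:ef1-envy-implies-pef1-envy} together with the fPO conclusion from \cref{thm:first-welfare}. Nothing is missing or different in substance.
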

\begin{proof}
Follows from \cref{thm:first-welfare,thm:ef1-envy-implies-pef1-envy}.
\end{proof}

One approach to finding a pWEF1 market equilibrium is to start with an integral market equilibrium
that may not be pWEF1 and iteratively modify it till it becomes pWEF1.
The modifications we consider are changing prices and transferring items from agent to another.
We formalize this idea using the notion of \emph{transferability} defined below.

\begin{definition}[transferability]
Let $(A, p)$ be an integral market equilibrium
for instance $\Ical \defeq ([n], [m], (v_i)_{i=1}^n, w)$.
Let $i \in [n]$ be an agent and $j \in [m] \setminus A_i$ be an item.
Let $B$ be an allocation where $B_i \defeq A_i \cup \{j\}$
and $B_{i'} \defeq A_{i'} \setminus \{j\}$ for all $i' \neq i$.
Then $j$ is said to be \emph{transferable} to agent $i$
if $(B, p)$ is also a market equilibrium.
\end{definition}

\begin{observation}
\label{thm:trn-check}
Let $(A, p)$ be an integral market equilibrium.
Let $i \in [n]$ be an agent and $j^* \in [m] \setminus A_i$ be a good or chore.
If $A_i$ contains a good or chore $j$, then
$j^*$ is transferable to $i$ iff $v_i(j^*)/p_{j^*} = v_i(j)/p_j$.
If $A_i$ contains only neutral items, then
$j^*$ is transferable to $i$ iff
\[ \frac{v_i(j^*)}{p_{j^*}} \in \left[ \max_{j: p_j > 0} \frac{v_i(j)}{p_j},
    \min_{j: p_j < 0} \frac{v_i(j)}{p_j} \right] \cap \mathbb{R}_{>0}. \]
\end{observation}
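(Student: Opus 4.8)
The plan is to directly unpack the definition of market equilibrium for the allocation $B$ obtained by moving $j^*$ to agent $i$, and compare the equilibrium conditions for $(B,p)$ with those for $(A,p)$. Since $(A,p)$ is already a market equilibrium and the price vector $p$ is unchanged, condition~1 of the equilibrium definition (signs of prices) is automatically satisfied by $(B,p)$. For condition~2, the best-bang-per-buck constraints $v_{i'}(k) \le \alpha_{i'} p_k$ for all $k$ are unchanged for every agent, so the only thing that can break is the tightness requirement ``$v_{i'}(k) = \alpha_{i'} p_k$ whenever $x_{i',k} > 0$.'' For every agent $i' \neq i$, removing $j^*$ from their bundle only drops a tightness constraint, so those agents are still fine. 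Hence $(B,p)$ is a market equilibrium if and only if agent $i$ still has some valid BBB value $\alpha_i$ under the enlarged bundle $B_i = A_i \cup \{j^*\}$; equivalently, there exists $\alpha_i \in \mathbb{R}_{>0}$ with $v_i(k) \le \alpha_i p_k$ for all $k$, with equality on all of $A_i \cup \{j^*\}$.

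First I would handle the case where $A_i$ already contains a good or chore $j$. Then, since $(A,p)$ is an equilibrium, agent $i$'s BBB is forced to be $\alpha_i = v_i(j)/p_j$ (this ratio is positive: for a good both $v_i(j)>0$... well, actually $v_i(j)=\alpha_i p_j$ with $p_j>0$ and $\alpha_i>0$ forces $v_i(j)>0$; similarly for a chore both are negative, giving a positive ratio). So the unique candidate value of $\alpha_i$ is already pinned down, and adding $j^*$ to $B_i$ requires exactly $v_i(j^*) = \alpha_i p_{j^*}$, i.e. $v_i(j^*)/p_{j^*} = v_i(j)/p_j$. Conversely, if this equality holds, then setting $\alpha_i := v_i(j)/p_j$ satisfies all the required conditions for $B_i$, because the inequalities $v_i(k)\le\alpha_i p_k$ for the other items are inherited from $(A,p)$. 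This gives the first ``iff.''

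Next I would treat the case where $A_i$ contains only neutral items, so $p_k = 0$ for all $k \in A_i$ and the tightness condition on $A_i$ reads $v_i(k) = \alpha_i \cdot 0 = 0$, which holds for free (neutral items have $v_i(k)=0$ — or at least, in an equilibrium $v_i(k)\le\alpha_i p_k = 0$, and... I should double-check: condition~1 says neutral items have $p_k=0$, and then tightness forces $v_i(k)=0$ for those in $A_i$). So $\alpha_i$ is not pinned down by $A_i$; instead it may be any positive real satisfying $v_i(k) \le \alpha_i p_k$ for all $k$ with $p_k \neq 0$. Rearranging: for goods ($p_k>0$) this says $\alpha_i \ge v_i(k)/p_k$, so $\alpha_i \ge \max_{k: p_k>0} v_i(k)/p_k$; for chores ($p_k<0$) this says $\alpha_i \le v_i(k)/p_k$, so $\alpha_i \le \min_{k: p_k<0} v_i(k)/p_k$. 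Adding $j^*$ to $B_i$ now additionally demands $v_i(j^*) = \alpha_i p_{j^*}$, i.e. $\alpha_i = v_i(j^*)/p_{j^*}$. Such an $\alpha_i$ exists (as a positive real) if and only if $v_i(j^*)/p_{j^*}$ lies in the intersection of the interval $[\max_{k:p_k>0} v_i(k)/p_k,\ \min_{k:p_k<0} v_i(k)/p_k]$ with $\mathbb{R}_{>0}$, which is exactly the claimed condition. (Edge cases: if there are no goods the max is $-\infty$, if there are no chores the min is $+\infty$; intersecting with $\mathbb{R}_{>0}$ takes care of sign issues.)

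I expect the main obstacle to be purely bookkeeping: making sure the ``only tightness can break, and only for agent $i$'' reduction is stated cleanly, and being careful with the degenerate sub-cases (empty set of goods or chores among the priced items, and the subtlety that $\alpha_i$ is genuinely non-unique in the neutral-only case, which is precisely why the statement splits into two cases). No deep idea is needed beyond a careful reading of the equilibrium definition.
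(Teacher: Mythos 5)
Your argument is correct and is exactly the definitional unpacking the authors presumably had in mind: the paper states this as an \emph{Observation} with no proof, and your reduction (prices unchanged, only agent $i$'s tightness constraint on $j^*$ is at stake, then solve for the admissible $\alpha_i$ in the two cases) fills that gap cleanly, including the sign and empty-set edge cases. Nothing to correct.
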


To find an integral market equilibrium to begin with,
we can just use the one that maximizes social welfare.

\begin{lemma}
\label{thm:soc-welf-ce}
Let $\Ical \defeq ([n], [m], (v_i)_{i=1}^n, w)$ be a fair division instance.
Let $A$ be the social-welfare maximizing allocation, i.e.,
each item $j$ is allocated to an agent in $\argmax_{i \in [n]} v_i(j)$.
For each $j \in A_i$, let $p_j \defeq v_i(j)$.
Then $(A, p)$ is a market equilibrium.
\end{lemma}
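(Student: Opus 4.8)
The plan is to exhibit explicit best-bang-per-buck values and verify the two conditions in the definition of market equilibrium directly. The key preliminary observation is that, by construction, $p_j = \max_{i \in [n]} v_i(j)$ for every item $j$: the item $j$ is given to some $i^* \in \argmax_{i \in [n]} v_i(j)$, and we set $p_j \defeq v_{i^*}(j)$, which is precisely this maximum.

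For the first condition, I would case on the classification of $j$ and use this observation. If $j$ is a good then $v_i(j) > 0$ for some $i$, so $p_j = \max_i v_i(j) > 0$. If $j$ is a chore then $v_i(j) < 0$ for all $i$, so $p_j = \max_i v_i(j) < 0$. If $j$ is neutral then $v_i(j) \le 0$ for all $i$ with equality for some $i$, so $p_j = \max_i v_i(j) = 0$. This is exactly condition~1 of the definition.

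For the second condition I would simply take $\alpha_i \defeq 1 \in \mathbb{R}_{>0}$ for every agent $i$. Then for every item $j$ we have $v_i(j) \le \max_{i' \in [n]} v_{i'}(j) = p_j = \alpha_i p_j$, giving the inequality part; and whenever $x_{i,j} > 0$, i.e.\ $j \in A_i$, agent $i$ lies in $\argmax_{i' \in [n]} v_{i'}(j)$, so $v_i(j) = \max_{i' \in [n]} v_{i'}(j) = p_j = \alpha_i p_j$, giving the tightness part. Hence $(A,p)$ is a market equilibrium.

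There is essentially no hard step here; the only points requiring a moment's care are that a neutral item $j \in A_i$ receives price $p_j = 0$ while its holder satisfies $v_i(j) = 0$ (since $i$ must be a maximizer and the maximum over agents of a neutral item's value is $0$), so the tightness requirement on the allocated bundle stays consistent, and that the choice $\alpha_i = 1$ is legitimate precisely because the definition allows the best-bang-per-buck value to be non-unique for an agent holding only neutral items.
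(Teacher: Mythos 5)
Your proposal is correct and follows essentially the same argument as the paper: both verify condition~1 from $p_j = \max_{i'} v_{i'}(j)$ and condition~2 by taking $\alpha_i = 1$ for every agent. Your write-up is just slightly more explicit about the neutral-item case.
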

\begin{proof}
It is easy to check that if $j$ is a good, then $p_j > 0$,
if $j$ is neutral, then $p_j = 0$, and if $j$ is a chore, then $p_j < 0$.

Pick any agent $i \in [n]$. Let $\alpha_i = 1$.
For any item $j \in A_i$, we have $v_i(j) = p_j = \alpha_i p_j$.
For any item $j \in A_k$, for some $k \neq i$, we have $v_i(j) \le v_k(j) = \alpha_i p_j$.
Hence, $(A, p)$ is a market equilibrium.
\end{proof}

\subsection{Algorithm for Two Agents}
\label{sec:ef1-po:n2-loc-search}

We give an algorithm to compute a WEF1+fPO allocation of mixed manna
for two agents using Fisher markets.
See \cref{algo:n2-loc-search} for a precise description of the algorithm.

The algorithm starts with an arbitrary market equilibrium $(A, p)$.
First, it identifies the price-envious agent $\ell$. Let $b$ be the other agent.
Then it raises the prices of items in $A_{\ell}$,
while ensuring that $(A, p)$ is always a market equilibrium,
till either a good can be transferred from $b$ to $\ell$
or a chore can be transferred from $\ell$ to $b$.
Then it performs the transfer and repeats the process till $A$ becomes WEF1.

\begin{algorithm}[htb]
\caption{%
Takes as input a fair division instance
$\Ical \defeq ([n], [m], (v_i)_{i=1}^n, w)$ where $n=2$,
and an integral market equilibrium $(A, p)$.
Outputs a (supposedly) WEF1 market equilibrium.}
\label{algo:n2-loc-search}
\begin{algorithmic}[1]
\State Let $G$ be the set of goods and $C$ be the set of chores.
\While{$(A, p)$ is not WEF1} \label{alg-line:n2ls:while}
    \State $\ell = \argmin_{i \in [2]} p(A_i)/w_i$ and $b = 3 - \ell$.
    \If{$A_{\ell} \cap (G \cup C) \neq \emptyset$ and $A_b \cap (G \cup C) \neq \emptyset$}
        \LineComment{Raise prices of items in $A_{\ell}$ till a transfer becomes possible.}
        \State Let $\alpha_{\ell}$ and $\alpha_b$ be the
            best-bang-per-buck for $\ell$ and $b$, respectively.
        \State Let $\displaystyle \beta = \max_{g \in A_b \cap G} \frac{v_{\ell}(g)/p_g}{\alpha_{\ell}}$
            and $\displaystyle \gamma = \max_{c \in A_{\ell} \cap C} \frac{\alpha_b}{v_b(c)/p_c}$.
            \Comment{let $\max(\emptyset) \defeq -\infty$}
        \State Set $p_j = p_j / \max(\beta, \gamma)$ for all $j \in A_{\ell}$.
    \EndIf \label{alg-line:n2ls:pch}
    \If{$\exists \ghat \in A_b \cap G$ that is transferable to $\ell$ in $(A, p)$}
        \State \label{alg-line:n2ls:gt}Set $A_{\ell} = A_{\ell} \cup \{\ghat\}$ and $A_b = A_b \setminus \{\ghat\}$.
    \ElsIf{$\exists \chat \in A_{\ell} \cap C$ that is transferable to $b$ in $(A, p)$}
        \State \label{alg-line:n2ls:ct}Set $A_{\ell} = A_{\ell} \setminus \{\chat\}$ and $A_b = A_b \cup \{\chat\}$.
    \Else
        \State \label{alg-line:n2ls:err}\texttt{error}
    \EndIf
\EndWhile
\State \Return $(A, p)$.
\end{algorithmic}
\end{algorithm}

\begin{lemma}
\label{thm:n2ls:l-env-b}
At the beginning of an iteration of \cref{algo:n2-loc-search}, if $(A, p)$ is a market equilibrium,
then $\ell$ WEF1-envies $b$.
\end{lemma}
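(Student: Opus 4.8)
The plan is to exploit the fact that $\ell$ is defined to be the price-poorest agent (the $\argmin$ of $p(A_i)/w_i$), and to combine this with \cref{thm:ef1-envy-implies-pef1-envy}, which says that WEF1-envy between two agents forces pWEF1-envy between them. Since there are only two agents, the loop guard ``$(A,p)$ is not WEF1'' says precisely that exactly one of the two ordered pairs of agents exhibits WEF1-envy; I will rule out the pair $(b,\ell)$, which leaves the pair $(\ell,b)$ as the only possibility.

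First I would unpack the hypotheses: at the beginning of an iteration the \textbf{while} guard has just been checked, so $A$ is not WEF1, and by assumption $(A,p)$ is still a market equilibrium, so \cref{thm:ef1-envy-implies-pef1-envy} is applicable. Because $n=2$, ``$A$ is not WEF1'' means that either $\ell$ WEF1-envies $b$ in $A$ or $b$ WEF1-envies $\ell$ in $A$ (or both).

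Next I would show $b$ cannot WEF1-envy $\ell$. If it did, then by \cref{thm:ef1-envy-implies-pef1-envy} agent $b$ would pWEF1-envy $\ell$ in $(A,p)$, and in particular the first condition of pWEF1-envy would give $p(A_b)/w_b < p(A_\ell)/w_\ell$. But the algorithm sets $\ell = \argmin_{i\in[2]} p(A_i)/w_i$, so $p(A_\ell)/w_\ell \le p(A_b)/w_b$, a contradiction. Hence $b$ does not WEF1-envy $\ell$, and combining this with the previous paragraph, $\ell$ WEF1-envies $b$, as claimed.

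The argument is short and I do not expect any real obstacle; the only point that needs a little care is matching the directions of the inequalities — the pWEF1-envy condition is strict whereas the $\argmin$ only guarantees a non-strict inequality — so the contradiction must be set up so that it still goes through in the boundary case $p(A_\ell)/w_\ell = p(A_b)/w_b$ where the tie is broken in favour of $\ell$.
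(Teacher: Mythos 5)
Your proposal is correct and follows essentially the same route as the paper's proof: use the $\argmin$ definition of $\ell$ to get $p(A_\ell)/w_\ell \le p(A_b)/w_b$, conclude via \cref{thm:ef1-envy-implies-pef1-envy} (in contrapositive form) that $b$ cannot WEF1-envy $\ell$, and then invoke the loop guard to deduce that $\ell$ must WEF1-envy $b$. The only cosmetic difference is that you phrase the middle step as a proof by contradiction rather than a direct contrapositive.
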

\begin{proof}
Based on the definition of $\ell$ and $b$, we get $p(A_b)/w_b \ge p(A_{\ell})/w_{\ell}$.
Hence, $b$ doesn't pWEF1-envy $\ell$ in $(A, p)$.
By \cref{thm:ef1-envy-implies-pef1-envy}, $b$ doesn't WEF1-envy $\ell$ in $A$.
Since $A$ is not WEF1 (line \ref{alg-line:n2ls:while}),
we get that $\ell$ WEF1-envies $b$.
\end{proof}

\begin{lemma}
\label{thm:n2ls:pch-meq}
If $(A, p)$ is a market equilibrium at the beginning of an iteration of \cref{algo:n2-loc-search},
then $(A, p)$ is a market equilibrium after line \ref{alg-line:n2ls:pch} in that iteration.
Moreover, $0 < \max(\beta, \gamma) \le 1$ in that iteration.
\end{lemma}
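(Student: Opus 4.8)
The plan is to prove the two assertions in order: first the numerical bound $0 < \max(\beta,\gamma) \le 1$, and then the claim that the price-rescaled market (with $A$ unchanged) is still an equilibrium. Write $\mu := \max(\beta,\gamma)$. We may assume the \textbf{if}-condition guarding \cref{alg-line:n2ls:pch} holds, since otherwise that block does nothing and $\beta,\gamma$ are not even defined; in particular both $A_\ell$ and $A_b$ contain a non-neutral item, so the best-bang-per-bucks $\alpha_\ell$ and $\alpha_b$ are uniquely determined and positive.

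For $\mu \le 1$: every term of $\beta$ has the form $(v_\ell(g)/p_g)/\alpha_\ell$ for a good $g \in A_b$, where $p_g > 0$ and the equilibrium inequality $v_\ell(g) \le \alpha_\ell p_g$ makes the term at most $1$; every term of $\gamma$ has the form $\alpha_b/(v_b(c)/p_c)$ for a chore $c \in A_\ell$, where $v_b(c) < 0$, $p_c < 0$, and $v_b(c) \le \alpha_b p_c$ gives $v_b(c)/p_c \ge \alpha_b$, so the term lies in $(0,1]$. An empty maximum is $-\infty$, which only helps. Hence $\mu \le 1$.

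The step I expect to be the main obstacle is the strict lower bound $\mu > 0$; I would argue by contradiction. Suppose $\beta \le 0$ and $\gamma \le 0$. Each term of $\gamma$ is strictly positive (a chore has negative price and negative disutility, and $\alpha_b > 0$), so $\gamma \le 0$ forces $A_\ell \cap C = \emptyset$; then $A_\ell$ contains only goods and neutral items, so $p(A_\ell) \ge 0$, hence $v_\ell(A_\ell) = \alpha_\ell p(A_\ell) \ge 0$ using the equilibrium equalities on $\ell$'s own bundle. On the other hand $\beta \le 0$ means every good in $A_b$ has nonpositive value to $\ell$, and since every chore and every neutral item also has nonpositive value to $\ell$, we get $v_\ell(A_b) \le 0$. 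But \cref{thm:n2ls:l-env-b} (applicable since $(A,p)$ is an equilibrium at the start of the iteration) says $\ell$ WEF1-envies $b$, so in particular $v_\ell(A_\ell)/w_\ell < v_\ell(A_b)/w_b$ with $w_\ell,w_b>0$, contradicting $v_\ell(A_\ell) \ge 0 \ge v_\ell(A_b)$. Hence $\mu > 0$, so dividing each price in $A_\ell$ by $\mu \in (0,1]$ preserves its sign and weakly increases its magnitude; in particular the first equilibrium condition (sign pattern of prices) is preserved.

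For the second assertion I would verify the second equilibrium condition with new best-bang-per-bucks $\alpha_\ell' := \alpha_\ell\mu$ for agent $\ell$ and $\alpha_b' := \alpha_b$ for agent $b$ (both positive since $\mu > 0$). On an agent's own items the equality $v_i(j) = \alpha_i p_j$ rescales exactly to $v_i(j) = \alpha_i' p_j'$. For the cross terms: the required inequality for $\ell$ on a good $g \in A_b$ is precisely $(v_\ell(g)/p_g)/\alpha_\ell \le \beta \le \mu$; the required inequality for $b$ on a chore $c \in A_\ell$ is precisely $\alpha_b/(v_b(c)/p_c) \le \gamma \le \mu$ (rearranged using $p_c < 0$); and for the remaining cross terms — a chore in $A_b$ as seen by $\ell$, a good in $A_\ell$ as seen by $b$, and any neutral item — the old equilibrium inequality already implies the new one, because $0 < \mu \le 1$ moves the relevant price in the favorable direction (or leaves a neutral price at $0$). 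This is routine, so I would not spell out every sign check; it verifies both equilibrium conditions and completes the proof.
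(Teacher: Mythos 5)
Your proposal is correct and follows essentially the same route as the paper: establish $\max(\beta,\gamma)\in(0,1]$ by combining the equilibrium inequalities with the envy guaranteed by \cref{thm:n2ls:l-env-b}, then verify the rescaled prices form an equilibrium with new best-bang-per-bucks $\alpha_\ell\max(\beta,\gamma)$ and $\alpha_b$. The only cosmetic difference is that the paper spells out all four cross-term sign checks explicitly where you summarize them as routine.
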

\begin{proof}
If $A_{\ell}$ or $A_b$ contains only neutral items,
then the prices are not modified, so $(A, p)$ remains a market equilibrium.
Now assume that both $A_{\ell}$ and $A_b$ have a non-neutral item.

If $v_{\ell}(j) \le 0$ for all $j \in A_b$ and $v_{\ell}(j) \ge 0$ for all $j \in A_{\ell}$, then
\[ \frac{v_{\ell}(A_{\ell})}{w_{\ell}} \ge 0 \ge \frac{v_{\ell}(A_b)}{w_b}, \]
which contradicts \cref{thm:n2ls:l-env-b}.
Hence, either $v_{\ell}(g) > 0$ for some $g \in A_b$
or $v_{\ell}(c) < 0$ for some $c \in A_{\ell}$ (or both).
Hence, $\beta > 0$ or $\gamma > 0$.

Since $(A, p)$ is a market equilibrium, we get
$v_{\ell}(g) \le \alpha_{\ell}p_g$ for all $g \in A_b \cap G$,
and $(-v_b(c)) \ge \alpha_b(-p_c)$ for all $c \in A_{\ell} \cap C$.
Hence, $\beta \le 1$ and $\gamma \le 1$.

Let $\rho \defeq \max(\beta, \gamma)$. Then $\rho \in (0, 1]$.
Let $\phat_j \defeq p_j/\rho$ for $j \in A_{\ell}$ and $\phat_j \defeq p_j$ for $j \in A_b$.
Let $\alphahat_{\ell} \defeq \alpha_{\ell}\rho$ and $\alphahat_b \defeq \alpha_b$.
We will show that $(A, \phat)$ is also a market equilibrium,
with $\alphahat_{\ell}$ and $\alphahat_b$ being the corresponding
best-bang-per-buck for agents $\ell$ and $b$, respectively.

For $g \in A_{\ell} \cap G$, we get
\begin{align*}
\frac{v_{\ell}(g)}{\phat_g} &= \frac{v_{\ell}(g)}{p_g}\rho = \alphahat_{\ell},
& \frac{v_b(g)}{\phat_g} &= \frac{v_b(g)}{p_g}\rho \le \alpha_b\rho \le \alphahat_b.
\end{align*}
For $c \in A_{\ell} \cap C$, we get
\begin{align*}
\frac{v_{\ell}(c)}{\phat_c} &= \frac{v_{\ell}(c)}{p_c}\rho = \alphahat_{\ell},
& \frac{v_b(c)}{\phat_c} &= \frac{v_b(c)}{p_c}\rho \ge \frac{\alpha_b}{\gamma}\rho \ge \alphahat_b.
\end{align*}
For $g \in A_b \cap G$, we get
\begin{align*}
\frac{v_{\ell}(g)}{\phat_g} &\le \alpha_{\ell}\beta \le \alphahat_{\ell},
& \frac{v_b(g)}{\phat_g} &= \alpha_b = \alphahat_b.
\end{align*}
For $c \in A_b \cap C$, we get
\begin{align*}
\frac{v_{\ell}(c)}{\phat_c} &\ge \alpha_{\ell} \ge \alphahat_{\ell},
& \frac{v_b(c)}{\phat_c} &= \alpha_b = \alphahat_b.
\end{align*}
Hence, $(A, \phat)$ is a market equilibrium.
\end{proof}

\begin{lemma}
\label{thm:n2ls:meq}
$(A, p)$ is a market equilibrium throughout \cref{algo:n2-loc-search}.
\end{lemma}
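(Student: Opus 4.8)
The plan is to prove \cref{thm:n2ls:meq} by induction on the iterations of \cref{algo:n2-loc-search}, showing that the market-equilibrium property is preserved through each iteration. The base case is immediate: the algorithm is given an integral market equilibrium $(A, p)$ as input. For the inductive step, I would assume that $(A, p)$ is a market equilibrium at the beginning of some iteration and track the two kinds of modifications the iteration performs: the price update on line \ref{alg-line:n2ls:pch}, and the item transfer on line \ref{alg-line:n2ls:gt} or line \ref{alg-line:n2ls:ct}.

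First I would invoke \cref{thm:n2ls:pch-meq}, which already tells us that after line \ref{alg-line:n2ls:pch} the pair $(A, p)$ is still a market equilibrium (and also records the new best-bang-per-buck values and the bound $0 < \max(\beta,\gamma) \le 1$). So the only remaining thing to check is that the transfer step preserves the market-equilibrium property. Here I would use the notion of transferability from the \cref{thm:trn-check} / the transferability definition: if $\ghat \in A_b \cap G$ is transferable to $\ell$, then by definition the allocation $B$ with $B_\ell = A_\ell \cup \{\ghat\}$ and $B_b = A_b \setminus \{\ghat\}$ together with the (unchanged) price vector $p$ is again a market equilibrium — which is exactly the state of the variables after line \ref{alg-line:n2ls:gt}. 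The symmetric argument handles line \ref{alg-line:n2ls:ct} with the transferable chore $\chat$. Chaining these observations, $(A, p)$ is a market equilibrium at the end of the iteration, completing the induction.

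The one gap to close is the \texttt{error} branch on line \ref{alg-line:n2ls:err}: I must argue that whenever the algorithm reaches the end of an iteration, at least one of the two transfers is actually available, so that the induction never has to reason about a state produced by the error branch. If the iteration executed the price update (i.e. both $A_\ell$ and $A_b$ have a non-neutral item), then by \cref{thm:n2ls:pch-meq} we have $\rho = \max(\beta,\gamma) \in (0,1]$; if $\rho = \beta$ then the maximizing good $g \in A_b \cap G$ now satisfies $v_\ell(g)/\phat_g = \alpha_\ell \beta / \beta \cdot \rho$-type equality, i.e. it has become best-bang-per-buck for $\ell$, so it is transferable (using \cref{thm:trn-check}, noting $A_\ell$ contains the non-neutral item); symmetrically if $\rho = \gamma$ the maximizing chore becomes transferable from $\ell$ to $b$. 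If instead the price-update \texttt{if} was skipped, then one of $A_\ell, A_b$ consists only of neutral items; combined with \cref{thm:n2ls:l-env-b} (which says $\ell$ WEF1-envies, hence envies, $b$, so $v_\ell(A_\ell)/w_\ell < v_\ell(A_b)/w_b$), a short case analysis — if $A_\ell$ is all-neutral then $v_\ell(A_\ell) = 0$ so $v_\ell(A_b) > 0$, forcing some good in $A_b$ that is transferable to $\ell$ via the interval condition of \cref{thm:trn-check}; if $A_b$ is all-neutral then $v_\ell(A_b) = 0$ so $v_\ell(A_\ell) < 0$, forcing a chore in $A_\ell$ transferable to $b$ — shows a transfer is always available. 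I expect this ruling-out of the error branch (and in particular getting the all-neutral-bundle subcase of \cref{thm:trn-check} to line up correctly) to be the main obstacle; the rest is bookkeeping on top of \cref{thm:n2ls:pch-meq} and the transferability definition.
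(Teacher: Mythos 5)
Your proof is correct and follows essentially the same route as the paper: induct over iterations, invoke \cref{thm:n2ls:pch-meq} for the price-update step, and appeal to the definition of transferability for the transfer step. The extra work you do to rule out the \texttt{error} branch is not actually needed for this particular lemma (that branch does not modify $(A,p)$, so equilibrium is trivially preserved there); the paper isolates that argument in the separate \cref{thm:n2ls:no-err}.
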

\begin{proof}
Suppose $(A, p)$ is a market equilibrium at the beginning of an iteration.
(This is true for the first iteration, by assumption.)
It remains a market equilibrium after line \ref{alg-line:n2ls:pch} by \cref{thm:n2ls:pch-meq}.
It remains a market equilibrium for the rest of the iteration by the definition of transferability.
Hence, each iteration preserves market equilibrium.
\end{proof}

\begin{lemma}
\label{thm:n2ls:no-err}
\Cref{algo:n2-loc-search} never executes line \ref{alg-line:n2ls:err}.
\end{lemma}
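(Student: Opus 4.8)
The plan is to show that whenever control could reach line~\ref{alg-line:n2ls:err}, at least one transfer is in fact available, so the \texttt{error} line is never executed. Fix an arbitrary iteration. By \cref{thm:n2ls:meq}, $(A,p)$ is a market equilibrium at the start of the iteration, and by \cref{thm:n2ls:l-env-b}, agent $\ell$ WEF1-envies $b$ there. I would split into two cases according to whether the price-raising \texttt{if}-block (the one ending at line~\ref{alg-line:n2ls:pch}) executes.

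\textbf{Case 1: the block is skipped}, i.e.\ $A_\ell$ or $A_b$ contains only neutral items; then $(A,p)$ is the start-of-iteration equilibrium, unchanged. The key structural fact for two agents is that if one bundle has only neutral items, then \emph{every} good and chore lies in the other bundle. Suppose $A_\ell$ is all-neutral. By the equilibrium condition, $v_\ell(j)=\alpha_\ell p_j=0$ for each (neutral) $j\in A_\ell$, so $v_\ell(A_\ell)=0$; WEF1-envy then gives $v_\ell(A_b)>0$, so some good in $A_b$ has positive $\ell$-value. I would take $g^\ast$ to be a global best-bang-per-buck good for $\ell$ (necessarily in $A_b$), note $v_\ell(g^\ast)/p_{g^\ast}>0$, and invoke the all-neutral clause of \cref{thm:trn-check}: its interval condition holds because $(A,p)$ being a market equilibrium forces $\max_{j:p_j>0}v_\ell(j)/p_j\le\alpha_\ell\le\min_{j:p_j<0}v_\ell(j)/p_j$, so $g^\ast$ is transferable to $\ell$. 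The symmetric argument handles $A_b$ all-neutral: now WEF1-envy gives $v_\ell(A_\ell)<0$, which (since owned goods have positive and owned chores negative $\ell$-value in an equilibrium) forces a chore into $A_\ell$; taking $c^\ast$ to be a global worst-bang-per-buck chore for $b$, which lies in $A_\ell$, \cref{thm:trn-check} makes $c^\ast$ transferable to $b$.

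\textbf{Case 2: the block executes.} By \cref{thm:n2ls:pch-meq}, $(A,p)$ is still a market equilibrium after line~\ref{alg-line:n2ls:pch} and $\rho\defeq\max(\beta,\gamma)\in(0,1]$; in particular $\rho$ is attained by $\beta$ or by $\gamma$, and the attaining one is positive. If $\rho=\beta$, let $\ghat\in A_b\cap G$ attain $\beta$. Line~\ref{alg-line:n2ls:pch} leaves $p_{\ghat}$ unchanged (it is in $A_b$) while replacing $\ell$'s best-bang-per-buck by $\alpha_\ell\rho=\alpha_\ell\beta$, so afterward $v_\ell(\ghat)/p_{\ghat}$ equals $\ell$'s new best-bang-per-buck; since $A_\ell$ contains a good or chore in this case, the good/chore clause of \cref{thm:trn-check} makes $\ghat$ transferable to $\ell$. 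If instead $\rho=\gamma$, let $\chat\in A_\ell\cap C$ attain $\gamma$; line~\ref{alg-line:n2ls:pch} divides $p_{\chat}$ by $\rho=\gamma$ while leaving $b$'s best-bang-per-buck equal to $\alpha_b$, so afterward $v_b(\chat)/p_{\chat}=\alpha_b$, and \cref{thm:trn-check} (applicable since $A_b$ contains a good or chore) makes $\chat$ transferable to $b$. In every sub-case, one of the conditions tested on lines~\ref{alg-line:n2ls:gt} and~\ref{alg-line:n2ls:ct} is satisfied.

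I expect Case~1 to be the main obstacle: there the prices are not adjusted, so the bookkeeping of \cref{thm:n2ls:pch-meq} is unavailable, and the argument must instead rest on the two-agent observation that an all-neutral bundle forces all goods and chores into the other bundle, together with choosing the correct branch of \cref{thm:trn-check} and verifying its interval condition from the equilibrium inequalities. Case~2 is then largely routine: it only requires tracking which prices and which best-bang-per-buck values are rescaled by line~\ref{alg-line:n2ls:pch}, reusing the hatted quantities constructed in the proof of \cref{thm:n2ls:pch-meq}.
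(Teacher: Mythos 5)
Your proof is correct and follows essentially the same route as the paper's: split on whether a bundle is all-neutral versus both bundles containing non-neutral items, and in the latter case use the extremizers of $\beta$ and $\gamma$ together with \cref{thm:trn-check} and the rescaled best-bang-per-buck values from \cref{thm:n2ls:pch-meq}. The only (harmless) deviations are that in the ``$A_b$ all-neutral'' subcase you derive $A_\ell \cap C \neq \emptyset$ from $v_\ell(A_\ell)<0$ via WEF1-envy where the paper argues via prices, and that you spell out the interval condition of \cref{thm:trn-check} which the paper leaves implicit.
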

\begin{proof}
Let $(A, p)$ be the market equilibrium at the beginning of an iteration
and let $\phat$ be the modified price vector after line \ref{alg-line:n2ls:pch}.

Suppose $A_{\ell}$ contains only neutral items.
If $v_{\ell}(j) \le 0$ for all $j \in A_b$, then
$v_{\ell}(A_{\ell})/w_{\ell} \ge 0 \ge v_{\ell}(A_b)/w_b$,
which contradicts \cref{thm:n2ls:l-env-b}.
Hence, $v_{\ell}(g) > 0$ for some $g \in A_b$.
By \cref{thm:trn-check}, some $\ghat \in A_b$ is transferable to $\ell$ in $(A, \phat)$,
so line \ref{alg-line:n2ls:gt} is executed.

Suppose $A_b$ contains only neutral items.
If $p_j \ge 0$ for all $j \in A_{\ell}$, then
$p(A_{\ell})/w_{\ell} \ge 0 \ge p(A_b)/w_b$,
which contradicts the definition of $\ell$ and $b$.
Hence, $A_{\ell} \cap C \neq \emptyset$.
By \cref{thm:trn-check}, some $\chat \in A_{\ell}$ is transferable to $b$ in $(A, \phat)$,
so line \ref{alg-line:n2ls:ct} is executed.

Suppose both $A_{\ell}$ and $A_b$ contain non-neutral items.
By \cref{thm:n2ls:pch-meq}, $\rho \in (0, 1]$, where $\rho \defeq \max(\beta, \gamma)$.

\textbf{Case 1}: $\rho = \beta$.
\\ Then $A_b \cap G \neq \emptyset$.
Let $\ghat \in \argmax_{g \in A_b \cap G} v_{\ell}(g)/(\alpha_{\ell}p_g)$.
Then $\phat_{\ghat} = p_{\ghat}$, so
$v_{\ell}(\ghat)/\phat_{\ghat} = \alpha_{\ell}\beta$,
and for all $j \in A_{\ell} \cap (G \cup C)$, we have
$v_{\ell}(j)/\phat_j = \rho v_{\ell}(j)/p_j = \alpha_{\ell}\beta$.
Hence, by \cref{thm:trn-check}, $\ghat$ is transferable to $\ell$ in $(A, \phat)$,
so line \ref{alg-line:n2ls:gt} is executed.

\textbf{Case 2}: $\rho = \gamma$.
\\ Then $A_{\ell} \cap C \neq \emptyset$.
Let $\chat \in \argmax_{c \in A_{\ell} \cap C} \alpha_bp_c/v_{\ell}(c)$.
Then $\phat_{\chat} = \rho p_{\chat}$, so
$v_b(\chat)/\phat_{\chat} = \rho\alpha_b/\gamma = \alpha_b$,
and for all $j \in A_b \cap (G \cup C)$, we have
$v_b(j)/\phat_j = v_b(j)/p_j = \alpha_b$.
Hence, by \cref{thm:trn-check}, $\chat$ is transferable to $b$ in $(A, \phat)$,
so line \ref{alg-line:n2ls:ct} is executed.

In all cases we discussed above, either line \ref{alg-line:n2ls:gt}
or line \ref{alg-line:n2ls:ct} is executed.
Hence, line \ref{alg-line:n2ls:err} is not executed.
\end{proof}

\begin{lemma}
\label{thm:n2ls:same-lb}
The pair $(\ell, b)$ is the same for all iterations of \cref{algo:n2-loc-search}.
\end{lemma}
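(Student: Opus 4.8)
The plan is to show that if an iteration of \cref{algo:n2-loc-search} uses the pair $(\ell,b)$ and a subsequent iteration exists, then that next iteration also uses $(\ell,b)$; the claim then follows by a trivial induction on the iteration index. So fix an iteration using $(\ell,b)$. Let $(A,p)$ be its state at the start, which is a market equilibrium by \cref{thm:n2ls:meq}; let $p'$ be the price vector right after line~\ref{alg-line:n2ls:pch}; and let $(A',p')$ be the state at the end of this iteration, i.e., the state at the start of the next iteration. Since the next iteration runs, $(A',p')$ is not WEF1; let $\ell'$ be its price-poor agent. I will show $\ell'=\ell$, so that $b'=3-\ell'=b$.

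The core step is to prove that $b$ does \emph{not} pWEF1-envy $\ell$ in $(A',p')$. By \cref{thm:n2ls:l-env-b}, $\ell$ WEF1-envies $b$ in $A$; since line~\ref{alg-line:n2ls:pch} modifies neither the allocation nor the valuations, $\ell$ still WEF1-envies $b$ in $A$, and by \cref{thm:n2ls:pch-meq} the pair $(A,p')$ is a market equilibrium, so \cref{thm:ef1-envy-implies-pef1-envy} shows that $\ell$ pWEF1-envies $b$ in $(A,p')$; in particular conditions~2 and~3 of that definition hold with the price vector $p'$. By \cref{thm:n2ls:no-err}, the iteration performs a transfer, which does not alter prices. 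If it transfers a good $\ghat$ from $b$ to $\ell$ (line~\ref{alg-line:n2ls:gt}), then $\ghat\in A'_\ell$, $A'_\ell\setminus\{\ghat\}=A_\ell$, and $A'_b=A_b\setminus\{\ghat\}$, so condition~2 for $\ell$ pWEF1-envying $b$ in $(A,p')$, instantiated at $\ghat\in A_b$, gives
\[
\frac{p'(A'_\ell\setminus\{\ghat\})}{w_\ell}=\frac{p'(A_\ell)}{w_\ell}<\frac{p'(A_b\setminus\{\ghat\})}{w_b}=\frac{p'(A'_b)}{w_b},
\]
which is the negation of condition~2 for $b$ pWEF1-envying $\ell$ in $(A',p')$ at the good $\ghat\in A'_\ell$. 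Symmetrically, if it transfers a chore $\chat$ from $\ell$ to $b$ (line~\ref{alg-line:n2ls:ct}), then $\chat\in A'_b$, $A'_b\setminus\{\chat\}=A_b$, and $A'_\ell=A_\ell\setminus\{\chat\}$, so condition~3 for $\ell$ pWEF1-envying $b$ in $(A,p')$, instantiated at $\chat\in A_\ell$, gives
\[
\frac{p'(A'_\ell)}{w_\ell}=\frac{p'(A_\ell\setminus\{\chat\})}{w_\ell}<\frac{p'(A_b)}{w_b}=\frac{p'(A'_b\setminus\{\chat\})}{w_b},
\]
the negation of condition~3 for $b$ pWEF1-envying $\ell$ in $(A',p')$ at $\chat\in A'_b$. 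Either way, $b$ does not pWEF1-envy $\ell$ in $(A',p')$.

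To finish, note $(A',p')$ is a market equilibrium (\cref{thm:n2ls:meq}) and is not WEF1, so \cref{thm:n2ls:l-env-b} gives that $\ell'$ WEF1-envies $3-\ell'$ in $A'$, whence \cref{thm:ef1-envy-implies-pef1-envy} gives that $\ell'$ pWEF1-envies $3-\ell'$ in $(A',p')$. If $\ell'=b$ this says $b$ pWEF1-envies $\ell$ in $(A',p')$, contradicting the core step; hence $\ell'=\ell$. The step needing the most care is where pWEF1-envy of $\ell$ towards $b$ is invoked: it must be derived in $(A,p')$ (after line~\ref{alg-line:n2ls:pch}) rather than in $(A,p)$, because the inequalities in $(A,p)$ do not absorb the scaling of $A_\ell$'s prices by $1/\max(\beta,\gamma)$, and when $p(A_\ell)>0$ the transferred item would fail to witness the contradiction; re-deriving pWEF1-envy with the updated prices builds that scaling in. It is also worth noting that the transferred item is a good in the first case and a chore in the second, so it is an admissible witness for conditions~2 and~3 respectively, and that the argument is unchanged for iterations in which line~\ref{alg-line:n2ls:pch} is skipped (there $p'=p$).
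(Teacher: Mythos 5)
Your proof is correct and follows essentially the same route as the paper's: derive pWEF1-envy of $\ell$ toward $b$ under the post-line-\ref{alg-line:n2ls:pch} prices, instantiate condition~2 or~3 at the transferred item to show $b$ cannot pWEF1-envy $\ell$ after the transfer, and conclude via \cref{thm:n2ls:l-env-b} and \cref{thm:ef1-envy-implies-pef1-envy} that the roles cannot swap. The only difference is cosmetic (you phrase the final contradiction in terms of pWEF1-envy rather than WEF1-envy), and your closing remark about why the envy must be re-derived with the updated prices matches what the paper does implicitly.
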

\begin{proof}
Assume \wLoG{} that $\ell = 1$ and $b = 2$ at the beginning of
an iteration of \cref{algo:n2-loc-search}.
Let $(A^{(1)}, p^{(1)})$ be the value of $(A, p)$ at the beginning of the iteration,
and $(A^{(2)}, p^{(2)})$ be the value of $(A, p)$ at the end of the iteration.
Then $(A^{(1)}, p^{(2)})$ is the value of $(A, p)$ after line \ref{alg-line:n2ls:pch}.
They are all market equilibria by \cref{thm:n2ls:meq}.

By \cref{thm:n2ls:l-env-b}, 1 WEF1-envies 2 in $A^{(1)}$.
By \cref{thm:pef1-implies-ef1fpo}, 1 pWEF1-envies 2 in
$(A^{(1)}, p^{(1)})$ and $(A^{(1)}, p^{(2)})$.

By \cref{thm:n2ls:no-err}, either line \ref{alg-line:n2ls:gt} or
line \ref{alg-line:n2ls:ct} is executed.
If line \ref{alg-line:n2ls:gt} is executed,
then since 1 pWEF1-envies 2 in $(A^{(1)}, p^{(2)})$, we get
\[ \frac{p^{(2)}(A^{(2)}_1 \setminus \{\ghat\})}{w_1}
    = \frac{p^{(2)}(A^{(1)}_1)}{w_1}
    < \frac{p^{(2)}(A^{(1)}_b \setminus \{\ghat\})}{w_b}
    = \frac{p^{(2)}(A^{(2)}_b)}{w_b}. \]
If line \ref{alg-line:n2ls:ct} is executed,
then since 1 pWEF1-envies 2 in $(A^{(1)}, p^{(2)})$, we get
\[ \frac{p^{(2)}(A^{(2)}_1)}{w_1}
    = \frac{p^{(2)}(A^{(1)}_1 \setminus \{\chat\})}{w_1}
    < \frac{p^{(2)}(A^{(1)}_b)}{w_b}
    = \frac{p^{(2)}(A^{(2)}_b \setminus \{\chat\})}{w_b}. \]
Hence, 2 doesn't pWEF1-envy 1 in $(A^{(2)}, p^{(2)})$.

By \cref{thm:ef1-envy-implies-pef1-envy}, 2 doesn't WEF1-envy 1 in $A^{(2)}$.
In the next iteration (if it takes place), agent $\ell$ WEF1-envies agent $b$,
by \cref{thm:n2ls:l-env-b}. Hence, $\ell$ must be 1 and $b$ must be 2.
\end{proof}

\begin{lemma}
\label{thm:n2ls:iters}
\Cref{algo:n2-loc-search} terminates after at most $m$ iterations.
\end{lemma}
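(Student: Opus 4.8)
The plan is to find a non-negative integer potential that strictly decreases in every iteration and is bounded above by $m$ at the start.

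First I would invoke \cref{thm:n2ls:same-lb} to fix the roles once and for all: the pair $(\ell, b)$ is the same in every iteration, so we may speak of ``the agent $\ell$'' and ``the agent $b$'' unambiguously. This is the crucial structural fact, because it pins down the direction of every transfer: on line \ref{alg-line:n2ls:gt} a good is always moved \emph{from} $A_b$ \emph{to} $A_\ell$, and on line \ref{alg-line:n2ls:ct} a chore is always moved \emph{from} $A_\ell$ \emph{to} $A_b$. No item ever moves in the opposite direction, and neutral items never move at all.

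Next I define the potential $\Phi \defeq \abs{A_b \cap G} + \abs{A_\ell \cap C}$, where $G$ and $C$ are the sets of goods and chores fixed on line 1. Since $G$ and $C$ are disjoint subsets of $[m]$, we have $\Phi \le \abs{G} + \abs{C} \le m$ at the start of the first iteration. I then analyze a single iteration. The price-update step (through line \ref{alg-line:n2ls:pch}) leaves the allocation $A$ unchanged, so it does not affect $\Phi$. By \cref{thm:n2ls:no-err}, exactly one of lines \ref{alg-line:n2ls:gt} and \ref{alg-line:n2ls:ct} is executed. If line \ref{alg-line:n2ls:gt} fires, the transferred good $\ghat$ leaves $A_b$, so $\abs{A_b \cap G}$ drops by $1$, while $\abs{A_\ell \cap C}$ is unchanged (the item added to $A_\ell$ is a good, not a chore); if line \ref{alg-line:n2ls:ct} fires, the transferred chore $\chat$ leaves $A_\ell$, so $\abs{A_\ell \cap C}$ drops by $1$ and $\abs{A_b \cap G}$ is unchanged. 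In both cases $\Phi$ decreases by exactly $1$.

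Finally, since $\Phi$ is a non-negative integer, starts out at most $m$, and strictly decreases in every iteration, the \textbf{while} loop can execute at most $m$ times; if it ran more than $m$ times, $\Phi$ would become negative, a contradiction. There is no real obstacle here once \cref{thm:n2ls:same-lb,thm:n2ls:no-err} are available; the only point requiring care is that $\Phi$ is genuinely monotone, which is precisely why the fixed direction of transfers guaranteed by \cref{thm:n2ls:same-lb} is needed.
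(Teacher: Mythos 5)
Your proof is correct and is essentially identical to the paper's: both fix the pair $(\ell,b)$ via \cref{thm:n2ls:same-lb} and argue that the potential $\abs{A_b \cap G} + \abs{A_\ell \cap C}$ (at most $m$ initially) strictly decreases each iteration because every transfer moves a good out of $A_b$ or a chore out of $A_\ell$. The paper states this more tersely, but the argument is the same.
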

\begin{proof}
Assume \wLoG{} that $\ell = 1$ and $b = 2$ in each iteration
of \cref{algo:n2-loc-search} (by \cref{thm:n2ls:same-lb}).
In each iteration, either a good is transferred from agent 2 to agent 1,
or a chore is transferred from agent 1 to agent 2,
so $|A_1 \cap C| + |A_2 \cap G|$ decreases in each iteration.
Its value is at most $m$ initially,
so the algorithm terminates after at most $m$ iterations.
\end{proof}

\begin{theorem}
\Cref{algo:n2-loc-search} outputs a WEF1+fPO allocation in polynomial time.
\end{theorem}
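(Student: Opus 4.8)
The plan is to assemble the lemmas already established; essentially all the work is done, so this will be a short bookkeeping argument. First I would note that the algorithm is well-defined: its only exit is the \textbf{while} test on line~\ref{alg-line:n2ls:while}, and by \cref{thm:n2ls:no-err} the \texttt{error} on line~\ref{alg-line:n2ls:err} is never executed, so each iteration performs exactly one transfer (line~\ref{alg-line:n2ls:gt} or line~\ref{alg-line:n2ls:ct}).

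Next I would establish correctness. The input $(A,p)$ is an integral market equilibrium, and by \cref{thm:n2ls:meq} it remains one throughout the execution, hence also when the algorithm halts; so by \cref{thm:first-welfare} the returned allocation $A$ is fPO, and therefore PO. The algorithm halts only when the loop guard on line~\ref{alg-line:n2ls:while} fails, i.e.\ when $A$ is WEF1. So once I know the algorithm halts --- which is \cref{thm:n2ls:iters}, relying on \cref{thm:n2ls:same-lb} to keep the pair $(\ell,b)$ fixed so that the potential $|A_\ell\cap C|+|A_b\cap G|$ strictly decreases each round --- I can conclude that the output is a WEF1+fPO allocation.

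For the running time I would start the algorithm from the market equilibrium of \cref{thm:soc-welf-ce}, computed in a single pass over the items. There are at most $m$ iterations (\cref{thm:n2ls:iters}), and each does only polynomially much work: finding $\ell$ and $b$, computing $\alpha_\ell,\alpha_b,\beta,\gamma$, rescaling the prices of $A_\ell$, checking transferability via \cref{thm:trn-check}, and checking whether $A$ is WEF1 by testing every ordered pair of agents and every single-item deletion. The one thing I would be careful about is the bit-length of the numbers: a price update only multiplies the prices of $A_\ell$ by one further factor $\rho^{-1}$ with $\rho\in(0,1]$ a ratio of input-derived quantities, and since $\ell$ is fixed, after the $\le m$ updates every price and every best-bang-per-buck is still a ratio built from $O(m)$ arithmetic operations on the numbers $v_i(j)$, hence of polynomial encoding length.

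I do not anticipate a real obstacle at this point: the substantive facts --- that progress is always available (no \texttt{error}) and that the price-envious agent is invariant --- are \cref{thm:n2ls:no-err,thm:n2ls:same-lb}, already proved. What remains is routine, the only mildly delicate point being the polynomial bound on the size of the prices discussed above.
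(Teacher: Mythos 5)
Your proposal is correct and follows essentially the same route as the paper: termination via \cref{thm:n2ls:iters}, fPO via \cref{thm:n2ls:meq} and \cref{thm:first-welfare}, WEF1 from the loop guard, and the polynomial bound from at most $m$ iterations with efficient transferability checks. Your extra remark on the bit-length of the prices after repeated rescaling is a detail the paper's proof leaves implicit, and it is handled correctly.
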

\begin{proof}
By \cref{thm:n2ls:no-err}, \cref{algo:n2-loc-search} doesn't throw an error.
Hence, by \cref{thm:n2ls:meq}, \cref{algo:n2-loc-search} outputs a market equilibrium $(A, p)$.
By \cref{thm:first-welfare}, $A$ is fPO.
The algorithm breaks out of the while loop only when the allocation $A$ is WEF1.
Hence, $A$ is WEF1+fPO.

By \cref{thm:n2ls:iters}, the algorithm runs for at most $m$ iterations.
Each iteration can be executed in polynomial time, since
transferability can be checked efficiently using \cref{thm:trn-check}.
Hence, \cref{algo:n2-loc-search} runs in polynomial time.
\end{proof}

\section{Conclusion and Open Problems}
\label{sec:conclusion}

We saw that obtaining an EF1 allocation becomes significantly more challenging when
agents have unequal entitlements and the items are mixed manna.
However, we show that a slightly weaker notion of fairness,
called \emph{Weighted Envy-free up to One Transfer} (WEF1T), can still be achieved.
Whether a WEF1 allocation of mixed manna always exists,
even for three agents, is still an open problem.

For the special case of two agents, we obtain not just WEF1,
but also fractional Pareto optimality.
We do this by extending the concept of market equilibrium to the mixed manna setting.
It is not yet known whether an EF1+PO allocation of mixed manna always exists for three agents,
even for equal entitlements.

\end{document}